\DeclareSymbolFont{cyrletters}{OT2}{wncyr}{m}{n}
\DeclareMathSymbol{\Sha}{\mathalpha}{cyrletters}{"58}
\theoremstyle{plain}
\newtheorem{theorem}{Theorem}[section]
\newtheorem{conjecture}[theorem]{Conjecture}
\newtheorem{proposition}[theorem]{Proposition}
\newtheorem*{conjecture*}{Conjecture}
\newtheorem{definition}[theorem]{Definition}
\numberwithin{equation}{section}
\let\non\nonumber
\newcommand{\bea}{\begin{eqnarray}}
\newcommand{\eea}{\end{eqnarray}}
\newcommand{\be}{\begin{equation}}
\newcommand{\ee}{\end{equation}}
\newcommand{\sgn}{\mathrm{sgn}}
\newcommand{\Ext}{\mathrm{Ext}}
\newcommand{\noi}{\noindent}
\newcommand{\Aut}{\mathrm{Aut}}
\newcommand{\bF}{\mathbb{F}}
\newcommand{\bP}{\mathbb{P}}
\newcommand{\half}{\textstyle{\frac{1}{2}}}
\newcommand{\CH}{\mathcal{H}}
\newcommand{\CI}{\mathcal{I}}
\newcommand{\CL}{\mathcal{L}}
\newcommand{\CM}{\mathcal{M}}
\newcommand{\CN}{\mathcal{N}}
\newcommand{\CO}{\mathcal{O}}
\subjclass[2000]{14J60,  14D21, 14N35} 
\keywords{sheaves, moduli spaces}  
\begin{document}

\title[BPS invariants of semi-stable sheaves on rational surfaces]{BPS invariants of semi-stable sheaves on
rational surfaces}  

\author{Jan Manschot}
\address{Max Planck Institute for Mathematics, Vivatsgasse 7, 53111 Bonn, Germany}
\address{Bethe Center for Theoretical Physics, Bonn University, Nu\ss allee 12, 53115 Bonn, Germany}
\email{manschot@uni-bonn.de}

\begin{abstract}
\baselineskip=18pt 
\noi BPS invariants are computed, capturing topological invariants of moduli spaces of
semi-stable sheaves on rational surfaces. For a suitable stability
condition, it is proposed that the generating function of BPS invariants of a Hirzebruch surface
$\Sigma_\ell$ takes the form of a product formula. BPS invariants for
other stability conditions and other rational surfaces are obtained using 
Harder-Narasimhan filtrations and the blow-up
formula. Explicit expressions are given for rank $\leq 3$ sheaves on  
$\Sigma_\ell$ or the projective plane $\bP^2$. The applied techniques
can be applied iteratively to compute invariants for higher rank.
\end{abstract}
\maketitle

\baselineskip=19pt 


\section{Introduction}

Topological invariants of moduli spaces of semi-stable sheaves on complex surfaces are a rich subject with links to many topics 
in physics and mathematics. Closely related topics in physics are gauge theory, instantons, electric-magnetic
duality \cite{Vafa:1994tf} and also (multi-center) black holes
\cite{Diaconescu:2007bf, Manschot:2009ia, Manschot:2010xp}. Instantons
saturate the bound on their minimal action, the so-called
Bogomolnyi-Prasad-Sommerfeld (BPS) bound. The prime interest of
this article are topological invariants of moduli spaces of instantons, in particular their Poincar\'e
polynomials, which are commonly referred to as ``BPS invariants''. These invariants
correspond also to (refined) supersymmetric indices enumerating
supersymmetric or BPS states.

Instantons on complex surfaces are described algebraically as semi-stable vector
bundles and coherent sheaves \cite{Huybrechts:1996,
  Friedman:1998}. Generating functions of BPS invariants of sheaves on
surfaces are computed for rank 1 by G\"ottsche \cite{Gottsche:1990} 
and rank 2 by Yoshioka \cite{Yoshioka:1994, Yoshioka:1995}. These
generating functions lead to intriguing connections with (mock) modular forms 
\cite{Vafa:1994tf,Gottsche:1996, Gottsche:1998,Bringmann:2010sd},
which are a manifestation of electric-magnetic duality of the gauge theory \cite{Vafa:1994tf}.
Refs. \cite{Manschot:2010nc, Manschot:2011dj} compute BPS
invariants for rank 3 sheaves with Chern classes such that stability coincides
with semi-stability. 

The present article computes the BPS invariants of
{\it semi-stable} sheaves with rank 3 on Hirzebruch surfaces $\Sigma_\ell$ and on the projective plane $\bP^2$, and explains how to generalize the computations to 
higher rank. The developed techniques can be applied straightforwardly to
compute BPS invariants of the other rational and ruled surfaces. Although the extension from stable to semi-stable might seem a minor one, it requires to deal
with various subtle but fundamental aspects of the moduli spaces of semi-stable
sheaves, which could be neglected in Ref. \cite{Manschot:2010nc}. 
Having resolved how to deal with these aspects for $r=3$, the computations can
in principle be extended to any rank.
 
This introduction continues with summarizing the contents of the paper, after recalling the computations in
Ref. \cite{Manschot:2010nc} which were inspired by 
\cite{ Yoshioka:1994, Yoshioka:1995, Gottsche:1996,
  Gottsche:1998}. A crucial fact for the computations is that the blow-up $\phi:\tilde
\bP^2\to \bP^2$ is isomorphic to the Hirzebruch surface $\Sigma_{\ell}\to
C$ with $\ell=1$. The fibre $f$ and base $C$ of $\Sigma_{\ell}$ are both isomorphic to $\bP^1$. 
As explained in more detail in Section \ref{subsec:suitable}, the BPS invariants of $\Sigma_\ell$
with polarization $J$ chosen sufficiently close to $f$ (a so-called ``suitable''
polarization, see Definition \ref{def:suitable}) vanish for
sheaves with first Chern class $c_1$ and rank $r$ such that $c_1\cdot f\neq 0 \mod r$.

Wall-crossing then allowed to compute the BPS invariants for other choices of $J$. The
BPS invariants of $\bP^2$ were obtained from those of $\tilde \bP^2$ by application of the blow-up
formula \cite{Yoshioka:1996, Gottsche:1998, Li:1999}, which is a simple relation between the
generating functions of the invariants for $\bP^2$ and $\tilde
\bP^2$. However, its original form
is only valid for $\gcd(c_1\cdot \phi^*H,r)=1$ and
$J=\phi^* H$, with $H$ the hyperplane class of $\bP^2$. 

The present paper describes how to deal with the cases when $c_1$ and $r$ do not satisfy the 
constraints for vanishing of the BPS invariant or the blow-up
formula. The formal theory of invariants of moduli spaces (or stacks) of
semi-stable sheaves is developed by Kontsevich and Soibelman
\cite{Kontsevich:2008} and Joyce \cite{Joyce:2004, Joyce:2005, Joyce:2008}. We will 
in particular use the notion of virtual Poincar\'e functions for
moduli stacks, which are a generalization of Poincar\'e polynomials of
manifolds. The virtual Poincar\'e function of a moduli stack is
(conjecturally) related to the BPS invariant by
(\ref{eq:stackinvariant}). The BPS invariant is most natural from
physics and leads to generating functions with modular properties.  

Two novel ingredients of this paper are:
\begin{enumerate}
\item Eq. (\ref{eq:restrictfibre}) which provides for any rank
  $r\geq 1$ the generating function of virtual Poincar\'e functions of the
  moduli stack of sheaves on a Hirzebruch surface $\Sigma_\ell$ whose 
restriction to the fibre $f$ is semi-stable. Eq. (\ref{eq:totalset2})
gives the generalization to virtual Hodge functions for more general ruled surfaces $\Sigma_{g,\ell}$.
\item {\it Extended} Harder-Narasimhan filtrations $0\subset F_1\subset F_2\subset\dots  \subset
  F_\ell=F$, whose definition (Def. \ref{def:extHNfiltr}) differs from
  the usual definition (\ref{def:HNfiltr}) of HN filtrations by
  allowing quotients $E_i=F_i/F_{i-1}$ with equal (Gieseker)
  stability $p_J(E_i,n)\succeq p_J(E_{i+1},n)$. These filtrations in
  combination with the associated invariants (\ref{eq:setfiltration}) are
  particularly useful to compute generating functions of BPS
  invariants starting from Conjecture \ref{eq:restrictfibre} and their
  changes across walls of marginal stability. 
\end{enumerate}


To obtain the BPS invariants for a suitable polarization, one
subtracts from Eq. (\ref{eq:restrictfibre}) generating functions
corresponding to extended HN-filtrations given by (\ref{eq:setfiltration}), analogous to the seminal
papers about vector bundles on
curves \cite{Harder:1975, Atiyah:1982fa}.  Naturally, these techniques are also applicable to compute invariants
of semi-stable invariants for other mathematical objects like vector bundles on curves
and quivers. Also a solution to this
recursive procedure is given analogous to Ref. \cite{Zagier:1996}. Then repeated application
of the formula for filtrations (which is equivalent with the
wall-crossing formulas \cite{Kontsevich:2008, 
    Joyce:2008}) gives the BPS invariants for other choices of the
  polarization.

Finally, the blow-up formula provides the invariants on $\bP^2$. The
earlier mentioned condition $\gcd(c_1\cdot \phi^*H,r)=1$ is a
consequence of the fact that the blow-up formula is applicable for the
Poincar\'e functions $\CI^\mu(\Gamma,w;J)$ with respect to
$\mu$-stability instead of the more refined Gieseker
stability. However with the invariant for filtrations (\ref{eq:setfiltration}), it is
straightforward to transform the BPS invariants $\Omega(\Gamma,w;J)$
to $\CI^\mu(\Gamma,w;J)$ for $\mu$-stability. The rational factors in Eq.
(\ref{eq:setfiltration}) appear naturally in the relation between the
generating functions of these invariants.

The paper illustrates in detail the above steps for sheaves with rank
2 and 3, and shows their agreement with various consistency
conditions, e.g. the blow-up formula,
integrality and  $w\leftrightarrow w^{-1}$ symmetry of the Poincar\'e
polynomial. 
\\
\newline 
The outline of the paper is as follows. Section \ref{sec:sheaves}
reviews some necessary properties of sheaves on surfaces including
stability conditions. Section \ref{sec:genfunctions} discusses the
invariants and generating functions. Section \ref{sec:restrictfibre}
presents the generating function (\ref{eq:restrictfibre}) of the
virtual Poincar\'e functions of the stack of sheaves whose restriction
to the fibre is semi-stable. Then
we continue with the computation of the invariants of $\Sigma_\ell$
for any choice of polarization in Section
\ref{sec:ruledsurface}. Finally Section \ref{sec:projplane} presents the blow-up
formula (\ref{eq:blowup}) and computes
the generating function for sheaves on $\bP^2$ with $(r,c_1)=(3,0)$. 

\section*{Acknowledgements}
\noi I would like to thank L. G\"ottsche, H. Nakajima, T. Wotschke and
K. Yoshioka for helpful and inspiring discussions. I am grateful to
E. Diaconescu and especially S. Meinhardt for their explanations of
the work of D. Joyce \cite{Joyce:2004, Joyce:2005}. Part of the presented research was done as a postdoc of the IPhT,
CEA Saclay and supported by ANR grant  BLAN06-3-137168.

\section{Sheaves on surfaces}
\label{sec:sheaves} 

We consider sheaves on a smooth projective surface $S$. The Chern character of the sheaf $F$ is given by
ch$(F)=r(F)+c_1(F)+\frac{1}{2}c_1(F)^2-c_2(F)$ in terms of the
rank  $r(F)$ and its Chern classes $c_1(F)$ and $c_2(F)$. The vector
$\Gamma(F)$ parametrizes in the following the topological classes of
the sheaf $\Gamma(F):=(\,r(F),\mathrm{ch}_1(F),\mathrm{ch}_2(F)\,)$. Other frequently
occuring quantities are the determinant 
$\Delta(F)=\frac{1}{r(F)}(c_2(F)-\frac{r(F)-1}{2r(F)}c_1(F)^2)$, and
$\mu(F)=c_1(F)/r(F)\in H^2(S,\mathbb{Q})$. 

Given a filtration $0\subset F_1\subset \dots \subset F_{\ell}=F$, let
$E_i=F_i/F_{i-1}$ and $\Gamma_i=\Gamma(E_i)$. The
discriminant of $F$ is given in terms of the subobjects and quotients by:
\be
\label{eq:discriminant}
\Delta(\,\Gamma(F)\,)=\sum_{i=1}^\ell\frac{r(E_i)}{r(F)}\Delta(E_i)-\frac{1}{2r(F)}\sum_{i=2}^\ell
\frac{r(F_i)\,r(F_{i-1})}{r(E_i)} \left(\mu(F_{i})-\mu(F_{i-1}) \right)^2.
\ee

We are interested in the moduli space (or moduli stack) of semi-stable
sheaves with respect to Gieseker stability, but also the coarser
$\mu$-stability appears in order to apply the blow-up formula. To
define these two stability conditions, let $C(S)\subset H^2(S,\mathbb{R})$ be
the ample cone of $S$, and the (reduced) Hilbert polynomial $p_{J}(F,n)=
\chi(F\otimes J^n)/r(F)$. For a surface $S$, we have \cite{Friedman:1998}:
\be
p_{J}(F,n)=J^2n^2/2+\left( \frac{c_1(F)\cdot J}{r(F)}-\frac{K_S\cdot
    J}{2}\right)n +\frac{1}{r(F)}\left(\frac{c_1(F)^2-K_S\cdot c_1(F)}{2}-c_2(F) \right)+\chi(\CO_S).
\ee
Note that this function can be obtained from the physical central
charge as in \cite{Diaconescu:2007bf, Manschot:2009ia}. In the large volume limit, the
stability condition asymptotes
to the lexicographic ordering of polynomials based on
their coefficients. This ordering is denoted by $\prec$.  
Then,
\begin{definition}
A torsion free sheaf $F$ is Gieseker stable (respectively semi-stable) if for every subsheaf $F'\subsetneq
F$, $p_J(F',n)\prec p_J(F,n)$ ( respectively $p_J(F',n)\preceq
p_J(F,n)$ ).
\end{definition}
and
\begin{definition}
Given a choice $J\in C(S)$, a torsion free sheaf $F$ is called
$\mu$-stable if for every subsheaf $F'\subset F$,
$\mu(F')\cdot J <\mu(F)\cdot J$, and $\mu$-semi-stable if for every subsheaf $F'$,
$\mu(F')\cdot J \leq\mu(F)\cdot J$. 
\end{definition}
Thus $\mu$-stability is a coarser stability condition then Gieseker
stability, although the walls of marginal stability for both stability conditions are the
same. A wall of marginal stability $W(F',F)\subset H^2(S,\mathbb{R})$ is the
codimension 1 subspace of $C(S)$, such that $(\mu(F')-\mu(F))\cdot
J=0$, but $(\mu(F')-\mu(F))\cdot J\neq 0$ away from $W(F',F)$. 
The invariants based on Gieseker stability exhibit better
integrality and polynomial properties then the ones based on $\mu$-stability. On the other hand, operations like restriction
to a curve and blowing-up a point of $S$ are most natural for $\mu$-semi-stable sheaves. 

The moduli space $\CM_J(\Gamma)$ of Gieseker stable sheaves on $S$ (with respect
to the ample class $J$) whose rank and Chern classes are determined by
$\Gamma$ has expected dimension: 
\be
\label{eq:dim}
d_{\mathrm{exp}}(\Gamma)=\dim_{\mathbb{C}}(\Ext^1(F,F))-\dim_{\mathbb{C}}(\Ext^2(F,F))=2r^2\Delta-r^2\chi(\CO_S)+1.
\ee
When $\Ext^2(F,F)=0$ the moduli space is smooth and of the
expected dimension. Vanishing of $\Ext^2(F,F)$ for
semi-stable sheaves on surfaces can be proven if the polarization
satisfies $J\cdot K_S<0$. More generally, we have
\begin{proposition}
Let $J\in C(S)$ such that $J\cdot K_S<0$ and let $F$ and $G$ be Gieseker
semi-stable sheaves with respect to polarization $J$ such that $p_J(F,n) \preceq p_J(G,n)$. Then:
$$\mathrm{Ext}^2(F,G)=0.$$ 
\end{proposition}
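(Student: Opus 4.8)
The plan is to translate the statement into one about morphisms via Serre duality, and then to rule out any such morphism by playing off semi-stability against the sign condition $J\cdot K_S<0$; this is the standard reason $\Ext^2$ vanishes for semi-stable sheaves when $-K_S$ is ``positive enough''.

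First I would apply Serre duality on the smooth projective surface $S$: for coherent sheaves one has $\Ext^2(F,G)\cong \mathrm{Hom}(G,F\otimes K_S)^*$. So it is enough to show that every morphism $\phi\colon G\to F\otimes K_S$ is zero. Suppose $\phi\neq 0$ and set $I:=\mathrm{im}(\phi)$, so that $G\twoheadrightarrow I\hookrightarrow F\otimes K_S$. Since $F$ is torsion free and $K_S$ is a line bundle, $F\otimes K_S$ is torsion free; hence its subsheaf $I$ is a nonzero torsion free sheaf, in particular of strictly positive rank. Moreover tensoring by the line bundle $K_S$ preserves Gieseker semi-stability, so $F\otimes K_S$ is Gieseker semi-stable with respect to $J$.

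Next I would extract the two inequalities imposed by semi-stability. Writing $0\to\ker\phi\to G\to I\to 0$, the sheaf $\ker\phi$ is torsion free (a subsheaf of $G$) and either zero or a proper subsheaf of the semi-stable sheaf $G$; combined with additivity of the (reduced) Hilbert polynomials and $r(I)>0$ this gives $p_J(G,n)\preceq p_J(I,n)$. Since $I$ is a subsheaf of the semi-stable sheaf $F\otimes K_S$, semi-stability gives $p_J(I,n)\preceq p_J(F\otimes K_S,n)$. It remains to compare $p_J(F\otimes K_S,n)$ with $p_J(F,n)$: since $c_1(F\otimes K_S)=c_1(F)+r(F)K_S$ and the rank is unchanged, the displayed formula for $p_J$ shows that the coefficient of $n^2$ is the same for both while the coefficient of $n$ is shifted by exactly $K_S\cdot J<0$; hence $p_J(F\otimes K_S,n)\prec p_J(F,n)$ for the lexicographic order $\prec$.

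Chaining these with the hypothesis $p_J(F,n)\preceq p_J(G,n)$ gives
$$p_J(G,n)\preceq p_J(I,n)\preceq p_J(F\otimes K_S,n)\prec p_J(F,n)\preceq p_J(G,n),$$
which is absurd. Hence $\mathrm{Hom}(G,F\otimes K_S)=0$ and therefore $\Ext^2(F,G)=0$. The only delicate point — and the place where one should be careful — is the bookkeeping of ranks and torsion: one needs $I$ to have positive rank so that $p_J(I,n)$ is defined and both the subsheaf and quotient forms of Gieseker semi-stability apply, which is guaranteed here because $I$ embeds in a torsion free sheaf. The degenerate cases $\ker\phi=0$ or $I=F\otimes K_S$ merely replace one of the $\preceq$ by an equality and do not affect the conclusion.
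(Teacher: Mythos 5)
Your strategy is essentially the paper's: Serre duality turns $\Ext^2(F,G)$ into $\mathrm{Hom}(G,F\otimes K_S)$, and a nonzero morphism is excluded by playing Gieseker semi-stability against $J\cdot K_S<0$. Factoring through the image $I$ is in fact more careful than the paper's own wording, which treats $F\otimes K_S$ as a quotient of $G$; your quotient-side inequality $p_J(G,n)\preceq p_J(I,n)$ and the comparison of linear coefficients under twisting are both correct.

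There is, however, one step you cannot take for granted: the claim that tensoring with the line bundle $K_S$ preserves Gieseker semi-stability, which is what you use to get $p_J(I,n)\preceq p_J(F\otimes K_S,n)$. Unlike slope semi-stability, Gieseker semi-stability is \emph{not} preserved by twisting with an arbitrary line bundle once $J$ lies on a wall, and the proposition is stated (and in this paper needed) for such $J$: strictly semi-stable sheaves are precisely the case of interest. Concretely, on $\Sigma_0=\bP^1\times\bP^1$ with $J=C+f$ (so $J\cdot K_S=-4<0$), the sheaf $\CO(C-f)\oplus I_Z$ with $Z$ a point is strictly Gieseker semi-stable, both summands having reduced Hilbert polynomial $n^2+2n$; but its twist by $\CO(f)$ contains $\CO(C)$ with $p_J(\CO(C),n)=n^2+3n+2$, which exceeds the average $n^2+3n+\tfrac{3}{2}$, so the twisted sheaf is unstable. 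Hence your middle inequality is unjustified for non-generic $J$. The gap closes easily without that lemma: since $I\subset F\otimes K_S$, equivalently $I\otimes K_S^{-1}\subset F$, semi-stability of $F$ gives $p_J(I\otimes K_S^{-1},n)\preceq p_J(F,n)$, while $J\cdot K_S<0$ shifts the coefficient of $n$ so that $p_J(I,n)\prec p_J(I\otimes K_S^{-1},n)$; chaining with $p_J(G,n)\preceq p_J(I,n)$ and the hypothesis $p_J(F,n)\preceq p_J(G,n)$ yields the same contradiction. This patched route is also closer to the paper's argument, which invokes semi-stability only of $G$ on the quotient side and compares $F\otimes K_S$ directly with $F$.
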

\begin{proof}
Due to Serre duality $\mathrm{Ext}^2(F,G)=\mathrm{Hom}(G,F\otimes K_{S})^\vee$.
Assume contrary to the proposition that $\mathrm{Ext}^2(F,G)\neq 0$,
such that a non-vanishing morphism $\psi:G\to F \otimes K_{S}$
exists. Then $F \otimes K_{S}$ is a quotient of $G$, and
semi-stability of $G$ implies $p_J(F\otimes K_{S},n)\succeq
p_J(G,n)$. Now we find a contradiction, since the assumption $J\cdot
K_S<0$ implies $p_J(F\otimes
K_{S},n) \prec p_J(F,n) \preceq p_J(G,n)$. Therefore a non-vanishing
$\psi$ cannot exist and the proposition follows. 
\end{proof}
\noi Dimension estimates for (coarse) moduli spaces of
semi-stable sheaves are more subtle due to endomorphisms. 
We will find that BPS invariants computed in Sections \ref{sec:ruledsurface} and
\ref{sec:projplane}  are in agreement with the expected dimension (if non-vanishing).

Twisting a sheaf $E$ by a line bundle $\CL$ is an isomorphism of moduli
spaces. The Chern classes of the twisted sheaf $E'=E\otimes \CL$ are:
\begin{eqnarray}
\label{eq:ltwist}
&& r(E')=r(E), \quad c_1(E')=c_1(E)+r(E)c_1(\CL),\non\\
&& c_2(E')=c_2(E)+(r(E)-1)c_1(\CL)c_1(E)+c_1(\CL)^2\frac{r(E)(r(E)-1)}{2}.\non
\end{eqnarray}
The discriminant remains invariant: $\Delta(E')=\Delta(E)$. This shows that it suffices to compute
the generating functions for $c_1(E)\in H^2(S,\mathbb{Z}/r\mathbb{Z})$.

Determination of generating functions of BPS invariants for $r\geq 2$
is an open problem in general. To make progress, we specialize in the following to the
set of smooth ruled surfaces. A ruled surface is a surface $\Sigma_{g,\ell}$ together with a surjective
morphism $\pi: \Sigma_{g,\ell} \to C_g$ to a curve $C_g$ with genus $g$,
such that the fibre over each point of $C_g$ is a smooth irreducible
rational curve and such that $\pi$ has a section. Let $f$ be the
fibre of $\pi$, then $H_2(\Sigma_{g,\ell},\mathbb{Z})=\mathbb{Z}C_g\oplus\mathbb{Z}f$, with 
intersection numbers $C_g^2=-\ell$, $f^2=0$ and $C_g\cdot f=1$. The canonical class is
$K_{\Sigma_{g,\ell}}=-2C_g+(2g-2-\ell)f$. The holomorphic Euler characteristic
$\chi(\CO_{\Sigma_{g,\ell}})$ is $1-g$. An ample
divisor $J\in C(\Sigma_{g,\ell})$ is parametrized by $J_{m,n}=m(C_g+\ell
f)+nf$ with $m,n> 0$. The condition $J\cdot K_S<0$ translates to
$m(2g-2-\ell)<2n$.

Most of this article will further specialize to the Hirzebruch surfaces
$\Sigma_{0,\ell}=\Sigma_\ell$. For these surfaces $J\cdot K_S<0$ is
satisfied for all $J\in C(\Sigma_{\ell})$. The surface $\Sigma_1$
playes a special role since besides being a ruled surface,
$\Sigma_{1}$ is also the blow-up $\phi: \mathbb{\tilde P}^2\to
\mathbb{P}^2$ of the projective plane $\mathbb{P}^2$. The exceptional divisor of $\phi$ is $C_0=C$, and
the pullback of the hyperplane class $H$ of $\bP^2$ is given by $\phi^*H=C+f$. Due to the
simplicity of $\bP^2$, it is of intrinsic interest to determine the
generating functions of its BPS invariants.

\section{BPS invariants and generating functions}
\label{sec:genfunctions}
This section defines the generating functions of the BPS invariants
and discusses some of its properties. Physically, the BPS invariant
arises by considering topologically twisted $\CN=4$ Yang-Mills on the
surface $S$ \cite{Vafa:1994tf}. The path integral of this theory localizes on the
BPS solutions, including the instantons, due to the topologically
twisted supersymmetry \cite{Vafa:1994tf}. The BPS invariant is given by a weighted sum over the BPS
Hilbert space $\CH(\Gamma,J)$, and based on the path integral one can show that the (numerical) BPS invariant corresponds to the Euler
number of the BPS moduli space.  

Alternatively one can consider the $\CN=2$ supersymmetric 
theory in $\mathbb{R}^{3,1}$ obtained from the compactification of
IIA theory on a non-compact Calabi-Yau $\mathcal{O}(-K_S)\to
S$. The $\CN=2$ theory with gauge group $SU(2)$ and without hypermultiplets can be engineered by any of the
Hirzebruch surfaces $\Sigma_\ell$ \cite{Katz:1996fh}. Sheaves
supported on $\Sigma_\ell$ correspond to magnetic monopoles and
dyons in $\CN=2$ gauge theory. In this theory, the BPS invariant can be refined with
an additional parameter $w$ \cite{Gaiotto:2010be}:
\be
\Omega(\Gamma,w;J)=\frac{\mathrm{Tr}_{\CH(\Gamma,J)}\, 2\hat J_3(-1)^{2\hat
    J_3}(-w)^{2\hat I_3 +2\hat J_3}}{(w-w^{-1})^2},
\ee
with $\hat J_3$ a generator of the $SU(2)\cong \mathrm{Spin}(3)$ group
arising from rotations in $\mathbb{R}^{3,1}$, and $\hat I_3$ is a generator of the $SU(2)_R$
$R$-symmetry group. BPS representations have the form $\left[ (\half,0)\oplus
(0,\half)\right]\otimes \omega$ with $\omega=(j,j')$ a vacuum
representation of $\mathrm{Spin}(3)\oplus
SU(2)_R$ with spins $j$ and $j'$. One factor of $w-w^{-1}$ in the
denominator will vanish due to the factor $(\half,0)\oplus (0,\half)$ (the
half-hypermultiplet) present for every BPS
state \cite{Gaiotto:2010be}.  Since $\Omega(\Gamma,w;J)$ is thus essentially an $SU(2)$
character, this shows that $\Omega(\Gamma,w;J)$ is a polynomial divided by
$w-w^{-1}$; the polynomial has integer coefficients and is invariant under $w\leftrightarrow
w^{-1}$. The positivity conjectures of Ref. \cite{Gaiotto:2010be} assert furthermore
that the coefficients are positive. We choose to divide by the factor
$w-w^{-1}$ in order to have nice modular properties of the
generating functions. See for example Eq. (\ref{eq:rank1}). 

The $\CN=2$ picture shows that the refined BPS invariant provides more
information than the Euler number of the moduli space. The
$w$-expansion is expected to give the $\chi_y-$genus of the BPS moduli
space \cite{Chuang:2013}. To make this more precise, we let $\CM_J(\Gamma)$ be the suitably compactified moduli space of
semi-stable sheaves on $S$ with topological classes $\Gamma$ and for
polarization $J\in C(S)$, i.e. the Gieseker-Maruyama
compactification. If we assume that $J\cdot K_S<0$ and that
semi-stable is equivalent to stable, the moduli space is smooth and
the BPS invariant corresponds mathematically to \cite{Chuang:2013}:
\be
\label{eq:BPSinvariant}
\Omega(\Gamma,w;J):=\frac{w^{-\dim_\mathbb{C}\CM_J(\Gamma)}}{w-w^{-1}}\, 
\chi_{w^2}(\CM_J(\Gamma)), \qquad w^2\neq 1,
\ee
with on the right hand side the $\chi_y$-genus, which is defined in
terms of the virtual Hodge numbers
$h^{p,q}(X)=\dim H^{p,q}(X,\mathbb{Z})$ of the quasi-projective
variety $X$ by $\chi_{y}(X)=\sum_{p,q=
  0}^{\dim_\mathbb{C}(X)}(-1)^{p-q}\,y^p\,h^{p,q}(X)$. Eq. (\ref{eq:dim}) provides us with the
degree of $\chi_{w^2}(\CM_J(\Gamma))$, and since $\CM_J(\Gamma)$ is compact, orientable
and without boundary $h^{p,q}(X)=h^{\dim_\mathbb{C}(X)-p,\dim_\mathbb{C}(X)-q}(X)$. 
 For rational
surfaces, which include the ruled surfaces with $g=0$, the non-vanishing cohomology
of smooth moduli spaces of semi-stable sheaves has Hodge type $(p,p)$ \cite{Beauville:1992,
  Gottsche:1998}. Therefore, $\chi_{w^2}(X)=P(X,w)=\sum_{i=0}^{2\dim_\mathbb{C}(X)}b_i(X)\,w^i$
with $P(X,w)$ the Poincar\'e polynomial and
$b_i(X)=\sum_{p+q=i}h^{p,q}(X)$ the Betti numbers of $X$. 

If semi-stable is not equivalent to stable, $\CM_J(\Gamma)$ contains
singularities due to non-trivial automorphisms of the sheaves. The
formal mathematical framework for the integer BPS invariants or
motivic Donaldson-Thomas invariants  is developed by Kontsevich and Soibelman \cite{Kontsevich:2008}. For our
purposes it is useful to introduce also two other invariants, $\bar\Omega(\Gamma,w;J)$ and
$\CI(\Gamma,w;J)$. 
These invariants are defined using the notion of
moduli stack $\mathfrak{M}_J(\Gamma)$ which properly deals with the
mentioned singularities in the moduli space of semi-stable sheaves by
keeping track of the automorphism groups of the semi-stable sheaves.

The invariant $\CI(\Gamma,w;J)$ is an example of a motivic
invariant. In general an invariant of a quasi-projective variety $X$ is called 'motivic'
if $\Upsilon(X)$ satisfies:
\begin{itemize}
\item[-] If $Y\subseteq X$ is a closed subset then $\Upsilon(X)=\Upsilon (X\,\backslash\, Y) + \Upsilon (Y)$,
\item[-] If $X$ and $Y$ are quasi-projective varieties  $\Upsilon(X\times Y)=\Upsilon(X)\, \Upsilon(Y)$.
\end{itemize}
Ref. \cite{Joyce:2005} defines a motivic invariant, the virtual
Poincar\'e function $\Upsilon'$, for Artin stacks,
which are stacks whose stabilizer groups are algebraic groups. The virtual Poincar\'e function
$\CI(\Gamma,w;J)$ is a rational function in $w$ and a natural 
generalization of the Poincar\'e polynomial of smooth projective
varieties to stacks. The definition of these invariants for stacks is such
that for a quotient stack $[X/G]$ with $G$ an algebraic group, one
has $\Upsilon'([X/G])=\Upsilon(X)/\Upsilon(G)$. 

Using the virtual Poincar\'e function $\Upsilon'$, Definition 6.20 of Ref. \cite{Joyce:2004}
defines the virtual Poincar\'e function $\CI(\Gamma,w;J)$ (in Ref. \cite{Joyce:2004} denoted by
$I^\alpha_{\mathrm{ss}}(\tau)^\Lambda$) for the moduli stacks of
semi-stable sheaves on surfaces with Ext$^2(X,Y)=0$ for
$p_J(X,n)\prec p_J(Y,n)$.  Definition 6.22 of Ref. \cite{Joyce:2004} also
defines a second invariant $\bar \Omega(\Gamma,w;J)$
(denoted by $\bar J^\alpha(\gamma)^\Lambda$ in
Ref. \cite{Joyce:2004}). These appear in fact  rather natural from the physical perspective \cite{Manschot:2010qz, Kim:2011sc}.  See also
\cite{Nakajima:2010} for related discussions of invariants.

The invariants $\bar \Omega(\Gamma,w;J)$ are the rational multi-cover
invariants of $\Omega(\Gamma,w;J)$:
\begin{eqnarray}
\label{eq:refw}
\bar \Omega(\Gamma,w;J)&:=&\sum_{m|\Gamma}
\frac{\Omega(\Gamma/m,-(-w)^m;J)}{m}.
\end{eqnarray}
They can be expressed in terms of $\mathcal{I}(\Gamma,w;J)$ and vice versa
 (Theorem 6.8 in \cite{Joyce:2004}):
\be
\label{eq:inversestackinv}
\bar \Omega(\Gamma_i,w;J):=\sum_{\Gamma_1+\dots +\Gamma_\ell=\Gamma\atop
p_J(\Gamma_i,n)=p_J(\Gamma,n)\,\mathrm{for}\,\, i=1,\dots,\ell}
\frac{(-1)^{\ell+1}}{\ell} \,\prod_{i=1}^\ell \mathcal{I}(\Gamma_i,w;J).
 \ee 
with inverse relation:
\be
\label{eq:stackinvariant}
\mathcal{I}(\Gamma,w;J)=\sum_{\Gamma_1+\dots +\Gamma_\ell=\Gamma\atop
p_J(\Gamma_i,n)=p_J(\Gamma,n)\,\mathrm{for}\,\, i=1,\dots,\ell}
\frac{1}{\ell !} \,\prod_{i=1}^\ell \bar \Omega(\Gamma_i,w;J),
\ee 
Note that $\mathcal{I}(\Gamma,w^{-1};J)\neq -\mathcal{I}(\Gamma,w;J)$
and that $\mathcal{I}(\Gamma,w;J)$ in general has higher order poles in
$w$ compared to $\Omega(\Gamma,w;J)$. 

It is an interesting question what geometric information the integer
invariants $\Omega(\Gamma,w;J)$ carry if $m|\Gamma$ with $m>1$.
For $r=m=2$, Remark 4.6 of Ref. \cite{Yoshioka:1995} argues that $\Omega(\Gamma,w;J)$ computes the Betti 
numbers of rational intersection cohomology of
the singular moduli space $\CM_J(\Gamma)$. The generating
function in Remark 4.6 of Ref. \cite{Yoshioka:1995} is very closely
related to the one obtained for moduli spaces of semi-stable
vector bundles over Riemann surfaces in (the Corrigendum to) Ref. \cite{Kirwan:1986}. Intersection cohomology is
a cohomology theory for manifolds with singularities which
satisfies Poincar\'e duality if the manifolds are complex and compact. It is therefore natural to expect that
the BPS invariant (\ref{eq:BPSinvariant}) for $r\geq 3$ also provides Betti numbers of
intersection cohomology groups. This issue is left for further
research. 

The seminal papers \cite{Gottsche:1990, Yoshioka:1994, Yoshioka:1995}
compute moduli space and stack invariants by explicitly counting sheaves on
the surface $S$ defined over a finite field $\mathbb{F}_s$ with $s$
elements. The Poincar\'e function $\CI(\Gamma,s^\frac{1}{2};J)$ is upto an overall monomial computed by:
\be
\label{eq:countoverF}
\sum_{E\in M_J(\Gamma,\bF_s)} \frac{1}{\#\mathrm{Aut}(E)},
\ee
where $M_J(\Gamma,\bF_s)$ is the set of semi-stable sheaves with
characteristic classes $\Gamma$. The Weil  
conjectures imply that the expansion coefficients in $s$ are the Betti numbers
of the moduli spaces. The parameter $s$ is related to the $w$ in this
article by $s=w^{2}$. Eq. (\ref{eq:countoverF}) shows that poles of
$\CI(\Gamma,w;J)$ in $w$ appear when the sheaves have non-trivial automorphism groups. If semi-stable is  
equivalent to stable $\CI(\Gamma,w;J)=\Omega(\Gamma,w;J)$; the factor
$(w-w^{-1})^{-1}$ in Eq. (\ref{eq:BPSinvariant}) is due to the automorphisms which are multiplication by
$\mathbb{C}^*$. The automorphism group of semi-stable and unstable
bundles or sheaves is in general $GL(n)$, whose number of elements over $\mathbb{F}_s$
is $(1-s)(1-s^2)\dots (1-s^n)$ and thus lead to higher order poles.


We continue now by defining the generating function
$h_{r,c_1}(z,\tau;S,J)$ of $\bar \Omega(\Gamma,w;J)$:
\be
h_{r,c_1}(z,\tau;S,J)=\sum_{c_2} \bar \Omega(\Gamma,w;J)\,q^{r\Delta(\Gamma)-\frac{r\chi(S)}{24}}.
\ee
where $q:=e^{2\pi i \tau}$, with $\tau \in \mathcal{H}$ and $w:=
e^{2\pi i z}$ with $z\in \mathbb{C}$. Since twisting by a line bundle
(\ref{eq:ltwist}) is an isomorphism of moduli
spaces, it suffices to compute $h_{r,c_1}(z,\tau;S,J)$ for $c_1\in
H_2(S,\mathbb{Z}/r\mathbb{Z})$. The expansion parameter $t$ for $c_2$
in Refs. \cite{Gottsche:1990, Yoshioka:1994, Yoshioka:1995} is related
to $q$ by $q=s^rt$.
 
The generating function $h_{1,c_1}(z,\tau;S)$ depends only on 
$b_2(S)$ for $S$ a smooth projective surface with $b_{1}(S)=b_{3}(S)=0$ \cite{Gottsche:1990}: 
\be
\label{eq:rank1}
h_{1,c_1}(z,\tau;S)=\frac{i}{\theta_1(2z,\tau)\,\eta(\tau)^{b_2(S)-1}},
\ee
where the Dedekind eta function $\eta(\tau)$ and Jacobi theta function
$\theta_1(z,\tau)$ are defined by:  
\begin{eqnarray}
\label{eq:etatheta}
\eta(\tau)\quad \,\,&:=&q^{\frac{1}{24}}\prod_{n=1}^\infty (1-q^n),\non\\
\theta_1(z,\tau)&:=&i q^{\frac{1}{8}}(w^\frac{1}{2}-w^{-\frac{1}{2}})\prod_{n\geq 1}(1-q^n)(1-wq^n)(1-w^{-1}q^n).\non
\end{eqnarray}
The dependence on $J$ is omitted in Eq. (\ref{eq:rank1}), since all
rank 1 torsion free sheaves are stable throughout $C(S)$. 
Similarly, $J$ is omitted in the following from
$h_{r,c_1}(z,\tau;\bP^2,J)$, since $b_2(\bP^2)=1$ and therefore the BPS invariants do not vary as
function of $J$. For clarity of exposition, $\Sigma_\ell$ is omitted from the arguments 
of $h_{r,c_1}(z,\tau;\Sigma_\ell,J)$.

We will be mainly concerned with the invariants $\bar
\Omega(\Gamma,w;J)$ since the generating functions are defined in
terms of these invariants. However, some formulas are most naturally phrased in terms
of $\mathcal{I}(\Gamma,w;J)$. For example, the product formula of Conjecture \ref{conj:restrictfibre}
is a generating function for $\mathcal{I}(\Gamma,w;f)$ and the blow-up
formula in Section \ref{sec:projplane} is phrased in terms of
$\mathcal{I}^\mu(\Gamma,w;J)$, which are invariants with respect to $\mu$-stability instead of Gieseker stability.


\section{Restriction to the fibre of Hirzebruch surfaces}
\label{sec:restrictfibre}
This subsection deals with the set $M_f(\Gamma)$ of sheaves whose restriction to the (generic)
fibre $f$ of $\pi:\Sigma_\ell \to C$ is semi-stable. Inspired by the
existing results for $r=1$ and 2 \cite{Gottsche:1990, Yoshioka:1995}
and moduli stack invariants for vector bundles over Riemann surfaces
\cite{Harder:1975, Atiyah:1982fa}, a generating
function for $r\geq 1$ is proposed enumerating virtual Poincar\'e functions $\CI(\Gamma,w;f)$ of moduli 
stacks $\mathfrak{M}_f(\Gamma)$ of sheaves whose restriction to the
fibre is semi-stable. We do not present a derivation of this generating
function based on $\mathfrak{M}_f(\Gamma)$ for $r\geq 3$, nor an analysis of the
properties of $\mathfrak{M}_f(\Gamma)$. Section \ref{sec:ruledsurface} computes the BPS invariants
starting from these generating functions, and shows that they pass
various non-trivial consistency checks implied by the blow-up 
and wall-crossing formulas. 

We define the generating function $H_{r,c_1}(z,\tau;f)$ of
$\CI(\Gamma,w;f)$ by:
\be
H_{r,c_1}(z,\tau;f):=\sum_{c_2} \CI(\Gamma,w;f)\,
q^{r\Delta(\Gamma)-\frac{\chi(S)}{24}}.
\ee
The following conjecture gives $H_{r,c_1}(z,\tau;f)$ for any $r\geq 1$ and
$c_1\in H_2(\Sigma_\ell,\mathbb{Z})$: 
\begin{conjecture} 
\label{conj:restrictfibre}
The function $H_{r,c_1}(z,\tau;f)$ is given by:
\be
\label{eq:restrictfibre}
H_{r,c_1}(z,\tau;f)=\left\{ \begin{array}{cl}
     \frac{i\,(-1)^{r-1}\,\eta(\tau)^{2r-3}}{\theta_1(2z,\tau)^2\,\theta_1(4z,\tau)^2\dots\theta_1((2r-2)z,\tau)^2\,\theta_1(2rz,\tau)},
    & \mathrm{if}\,\,c_1\cdot f=0\mod r,\quad r\geq 1, \\ 0, &
    \mathrm{if}\,\,c_1\cdot f\neq 0\mod r, \quad r>1. \end{array}\right.
\ee
\end{conjecture}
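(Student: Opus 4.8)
\noindent The natural line of attack is to realise $\CI(\Gamma,w;f)$ as a motivic/stacky point count: by \eq{eq:countoverF} and the Weil conjectures it suffices, for every prime power $s$, to compute the weighted cardinality $\sum_{E}1/\#\Aut(E)$ over the set $M_f(\Gamma,\bF_s)$ of torsion-free sheaves on $\Sigma_\ell/\bF_s$ of type $\Gamma$ whose restriction to the generic fibre is semi-stable, and then to reassemble the answer over $c_2$. With this in hand the second case is immediate: the generic fibre is a $\bP^1$, on which a rank $r$ semi-stable bundle is necessarily $\CO(a)^{\oplus r}$ of degree $ra$; since the degree of $E|_f$ equals the topological number $c_1(E)\cdot f$, divisibility fails when $c_1\cdot f\neq 0\bmod r$, so $M_f(\Gamma)=\emptyset$, hence $\CI(\Gamma,w;f)=0$ for all $c_2$ and $H_{r,c_1}(z,\tau;f)=0$. (Twisting by $\CO(kf)$ shifts $c_1$ by $rkf$, so only the residue $c_1\cdot f\bmod r$ matters and every representative of a nonzero residue gives an empty moduli stack.)

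For the first case one twists by a line bundle so that $c_1\cdot f=0$, whence $E|_f=\CO^{\oplus r}$ on the generic fibre, and then stratifies $M_f(\Gamma)$ and resums. First peel off the double dual: a torsion-free $F$ with $F\subset F^{\vee\vee}$ and $F^{\vee\vee}/F$ of finite length is recorded by the punctual data of that quotient, whose contribution assembles — as in G\"ottsche's rank-one computation — into powers of $\prod_{n\geq1}(1-q^n)^{-1}$, i.e. into $\eta$-factors. It then remains to enumerate locally free $E$ with generic fibre restriction $\CO^{\oplus r}$: such an $E$ agrees with $\pi^\ast\pi_\ast E$ away from the finitely many special fibres where its restriction jumps, and along each such fibre it is an iterated elementary modification of the pullback of a bundle on $C$, labelled by the rank $k\in\{1,\dots,r-1\}$ of the jumping subsheaf and by a ``height''. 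One then (i) counts the rank $r$ bundles on $C\cong\bP^1$ together with the torsion sheaves that appear, obtaining a ``base'' generating function; (ii) for each colour $k$ and each point of $C$ sums the geometric contribution of the modifications there, over all heights and multiplicities; (iii) multiplies over all points of $C\cong\bP^1$, i.e. takes the symmetric (motivic) exponential over $\bP^1$. The content of the conjecture is that the base contributes $i(-1)^{r-1}\eta(\tau)^{-1}\theta_1(2rz,\tau)^{-1}$, that each colour $k$ contributes $\eta(\tau)^2\theta_1(2kz,\tau)^{-2}$, and that the product is
\[
\frac{i(-1)^{r-1}\,\eta(\tau)^{-1}}{\theta_1(2rz,\tau)}\ \prod_{k=1}^{r-1}\frac{\eta(\tau)^{2}}{\theta_1(2kz,\tau)^{2}}
=\frac{i(-1)^{r-1}\,\eta(\tau)^{2r-3}}{\theta_1(2z,\tau)^{2}\cdots\theta_1((2r-2)z,\tau)^{2}\,\theta_1(2rz,\tau)}
\]
the exponent $2$ on $\theta_1(2kz,\tau)$ coming from the two (upward and downward) elementary modifications of colour $k$ together with the rank-two even cohomology of $\bP^1$, and the single $\theta_1(2rz,\tau)^{-1}$ from the rank $r$ degree-and-torsion sum over the base.

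It is cleaner to organise this resummation recursively rather than geometrically: peel off the fibrewise Harder--Narasimhan filtration one step at a time, exactly as in the Harder--Narasimhan--Atiyah--Bott treatment of $\mathrm{Bun}_r$ over a curve and as the paper does for ordinary HN filtrations, expressing $H_{r,c_1}(z,\tau;f)$ in terms of the lower-rank $H_{r',c_1'}(z,\tau;f)$ by subtracting the filtered contributions, and then verify that the proposed product is the unique solution of this recursion, in the spirit of Zagier's resolution. As confirmation the formula specialises correctly: for $r=1$ it reproduces G\"ottsche's $h_{1,c_1}(z,\tau;\Sigma_\ell)$ in \eq{eq:rank1} with $b_2(\Sigma_\ell)=2$, for $r=2$ it reproduces Yoshioka's formula, it is manifestly invariant under $w\leftrightarrow w^{-1}$, and — as the later sections check — it is consistent with the wall-crossing and blow-up formulas.

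The main obstacle is precisely steps (ii)--(iii) for $r\geq 3$: there is at present no clean geometric model of the stack $\mathfrak{M}_f(\Gamma)$, the strata cut out by the fibrewise jumping behaviour are non-compact (generic-fibre semi-stability is an open, not closed, condition), so one must work throughout with virtual Poincar\'e functions, and one must show that the resulting multiple series — over the length of the torsion, over the heights and positions of the fibre modifications, and over $c_2$ — telescopes exactly into the stated theta/eta product; equivalently, that the recursion above admits the claimed closed form. This is the ruled-surface analogue of the Harder--Narasimhan/Desale--Ramanan/Zagier computation of the Poincar\'e series of $\mathrm{Bun}_r$, enriched by the extra surface directions (the $\eta$-powers from punctual sheaves and the modification colours along fibres) absent for bundles on a curve. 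A subsidiary point is to ensure that $\CI(\Gamma,w;f)$ is genuinely the naive stacky count \eq{eq:countoverF}, which requires the $\Ext^2$-vanishing of the Proposition above to survive the limit $J\to f$, and to check that passing from $\CI$ back to the $\bar\Omega$-generating function $h_{r,c_1}$ via \eq{eq:inversestackinv}--\eq{eq:stackinvariant} respects the product structure.
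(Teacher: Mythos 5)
Your proposal follows essentially the same route as the paper: the vanishing case via the splitting of bundles on the fibre $\bP^1$ and the degree $c_1\cdot f$, the $r=1$ check against G\"ottsche, Yoshioka's finite-field count via $\pi^*\pi_*F$ and elementary transformations for $r=2$, and for $r\geq 3$ only a plausible stratification/recursion strategy plus consistency checks (blow-up, wall-crossing, integrality), which is exactly the paper's stance — the statement is left as a conjecture there too, with the remark that a proof along the $r=2$ lines is ``conceivable''. Your honest identification of the unresolved resummation for $r\geq 3$ matches the actual status of the result, so there is no gap relative to what the paper itself establishes.
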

The above expressions for $H_{r,c_1}(z,\tau;f) $ are not conjectural
for all $(r,c_1)$. Vanishing of $H_{r,c_1}(z,\tau;f)$ for $c_1\cdot f\neq 0\mod r$ is well
known. See for example Section 5.3 of \cite{Huybrechts:1996}. The vanishing is
a consequence of the fact that all bundles $F$ on $\bP^1$ are
isomorphic to a sum of line bundles $F\cong \CO(d_1)\oplus \CO(d_2)\dots
\CO(d_r)$. Therefore, a bundle $F$ on $\bP^1$ can only be semi-stable\footnote{Recall that a
  vector bundle $F$ of rank $r$ and degree $d$ on a curve $C$ is stable (respectively semi-stable) if for every
subbundle $F'\subsetneq F$ (with rank $r'$ and degree $d'$)
$d'/r'<d/r$ (respectively $d'/r'\leq d/r$).} if its degree $d$ is equal to $ 0 \mod r$ such that the degrees of the
line bundles are $d_i=d/r$.  The degree $d(E_{|f})$ of the restriction of a sheaf
$E$ on $\Sigma_\ell$ to $f$ is equal to $c_1(E)\cdot
f$. Therefore, the only cases for which $H_{r,c_1}(z,\tau;f)$ does not
vanish is for $c_1\cdot f=0\mod r$. 

For $r=1$, Eq. (\ref{eq:restrictfibre}) reduces to Eq. (\ref{eq:rank1}).
Ref. \cite{Yoshioka:1995} proved the conjecture for $(r,c_1)=(2,f)$,
which is now briefly recalled. Ref. \cite{Yoshioka:1995} considers the
ruled surface $\tilde \bP^2$ over a finite field $\mathbb{F}_s$, and utilizes the fact that any vector bundle in $F$ 
can be obtained from $\pi^*\pi_* F$, which is a vector bundle on
$\tilde \bP^2$ supported on $C$, by successive elementary
transformations. 

An elementary transformation is defined by \cite{Huybrechts:1996}:
\begin{definition}
Let $D$ be an effective divisor on the surface $S$. If $F$ and $G$ are
vector bundles on $S$ and $D$ respectively, then a vector bundle $F'$
on $S$ is obtained by an elementary transformation of $F$ along $G$ if
there exists an exact sequence:
\be
0\to F' \to F\to i_* G\to 0,
\ee
where $i$ denotes the embedding $D\subset S$.
\end{definition}  

This shows that the contribution to $h_{2,c_1}(z,\tau;J)$ from $M_f(\Gamma)$ is
the product of the total set of vector bundles on $C$, multiplied
by the number of elementary transformations. The total set of vector
bundles with $r=2$ on $C$ is enumerated by \cite{Harder:1975}:
\be
\label{eq:zeta2}
\frac{s^{-3}}{1-s}\,\zeta_C(2)
\ee
where $\zeta_{C}(n)$ is the zeta function of the Riemann surface
$C_0$. One has for general genus $g$:
\be
\zeta_{C_g}(n)=\frac{\prod_{j=1}^{2g}(1-\omega_j s^{-n})}{(1-s^{-n})(1-s^{1-n})}.
\ee
Multiplication of (\ref{eq:zeta2}) by the factor due to elementary transformations gives \cite{Yoshioka:1995}: 
\be
\label{eq:setrestrict}
\sum_{c_2} \sum_{E\in M_f(2,mf,c_2)}\frac{t^{c_2}}{\#
  \Aut(E)}=\frac{s^{-3}}{1-s}\zeta_C(2)\,\prod_{a\geq 1}Z_s(S,s^{2a-2}t^a) Z_s(S,s^{2a}t^a),
\ee
with $Z_s(S,t)$ the zeta function of the surface $S$:
\be
Z_s(S,t)=\frac{1}{(1-t)(1-st)^{b_2(S)}(1-s^2t)}.
\ee
The parameter substitutions
$q=s^rt$ and $w^2=s$ give then Eq. (\ref{eq:restrictfibre}) (upto
an overal monomial in $w$ and $q$).  

This derivation for $r=2$ indicates that $H_{r,c_1}(z,\tau;f)$ is closely related to
that of the virtual Poincar\'e function of the stack of vector bundles on a Riemann
surface $C_g$ with genus $g$ \cite{Harder:1975, Atiyah:1982fa}: 
\be
\label{eq:totalset}
H_r(z;C_g):=-w^{r^2(1-g)}\frac{(1+w^{2r-1})^{2g}}{1-w^{2r}}\prod_{j=1}^{r-1}\frac{(1+w^{2j-1})^{2g}}{(1-w^{2j})^{2}}.
\ee
The first term in the $q$-expansion of  Eq. (\ref{eq:restrictfibre}) starts with
Eq. (\ref{eq:totalset}) for $g=0$. One could thus understand $H_{r,c_1}(z,\tau;f)$ as
an extension of $H_r(z;C_0)$ to a modular infinite product. 
It is conceivable that Conjecture \ref{conj:restrictfibre} for
$r>2$ can be proven in a similar manner as for $r=2$. The following sections show
that at least for $r=3,4$, it is consistent with various other
results. Moreover, it continues to hold for the other Hirzebruch
surfaces with $\ell\geq 0$.

As an aside we mention the generalization of the conjecture to ruled surfaces $\Sigma_{g,\ell}$ over a Riemann surface $C_g$ with
$g> 0$. These surfaces are not rational and the moduli
spaces of semi-stable sheaves for these surfaces also have cohomology $H^{p,q}(\CM_J(\Gamma),\mathbb{Z})$ for $p\neq
q$. In order to capture this more refined information we recall the
refinement of Eq. (\ref{eq:totalset}) to the virtual Hodge function \cite{Earl:2000}:
\be
\label{eq:totalset2}
H_r(u,v;C_g):=-\frac{(xy)^{r^2(1-g)/2}}{1-x^{r}y^r}\frac{\prod_{j=1}^{r} (1+x^jy^{j-1})^{g}(1+x^{j-1}y^j)^{g}}{\prod_{k=1}^{r-1} (1-x^{k}y^k)^{2}}.
\ee
with $x:=e^{2\pi i u}$ and $y:=e^{2\pi i v}$. The structure of this
function directly suggests the following generalization of Conjecture
\ref{conj:restrictfibre} for the generating function
$H_{r,c_1}(u,v,\tau;f,\Sigma_{g,\ell})$ of virtual Hodge functions $\CI(\Gamma_i,x,y;f)$ of the moduli
stack $\mathfrak{M}_f(\Gamma;\Sigma_{g,\ell})$
:
\begin{conjecture} 
\label{conj:restrictfibre2}
The function $H_{r,c_1}(u,v,\tau;f,\Sigma_{g,\ell})$ is given by:
\be
\label{eq:restrictfibre2}
\left\{ \begin{array}{cl}
     \frac{i\,(-1)^{r-1}\,\eta(\tau)^{2r(1-g)-3}}{\theta_1(r(u+v),\tau)}
     \frac{\prod_{j=1}^{r} \theta_1(ju +(j-1)v+\frac{1}{2},\tau)^{g}\,
       \theta_1((j-1)u +jv+\frac{1}{2},\tau)^{g}}{\prod_{k=1}^{r-1} \theta_1(k(u+v),\tau)^2},
    & \mathrm{if}\,\,c_1\cdot f=0\mod r,\quad r\geq 1, \\ 0, &
    \mathrm{if}\,\,c_1\cdot f\neq 0\mod r, \quad r>1. \end{array}\right.
\ee
\end{conjecture}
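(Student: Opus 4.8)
\emph{Overall plan.} I would extend to arbitrary genus the argument by which Conjecture~\ref{conj:restrictfibre} was established for $(r,c_1)=(2,f)$ and recalled above, working as in \eq{eq:countoverF} with sheaves on $\Sigma_{g,\ell}$ defined over $\bF_s$ (and in its Hodge-refined, motivic form) and splitting the count into two geometrically independent pieces. The vanishing branch of \eq{eq:restrictfibre2} is in fact not conjectural and is proved exactly as for Conjecture~\ref{conj:restrictfibre}: a bundle of rank $r>1$ on $\bP^1$ is semi-stable only if its degree vanishes $\bmod\,r$, and $\deg(E_{|f})=c_1(E)\cdot f$ for every $E$ on $\Sigma_{g,\ell}$ and every fibre $f$, so $\mathfrak{M}_f(\Gamma;\Sigma_{g,\ell})=\varnothing$ and $\CI(\Gamma,x,y;f)=0$ whenever $c_1\cdot f\neq 0\bmod r$. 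For the nonvanishing branch the twisting isomorphism \eq{eq:ltwist} lets us reduce to $c_1\cdot f=0$, so that $E_{|f}\cong\CO_f^{\oplus r}$ on the generic fibre.

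\emph{Structural step.} For such $E$ one must show that $V:=\pi_*E$ is a rank $r$ vector bundle on $C_g$ and that the adjunction counit $\pi^*V\to E$ is an isomorphism away from finitely many ``jumping'' fibres, over each of which $E$ is obtained from $\pi^*V$ by a chain of elementary transformations in the sense recalled above (for $r=2$ this is precisely the input used by Yoshioka). Stratifying the stack $\mathfrak{M}_f(\Gamma;\Sigma_{g,\ell})$ by the isomorphism class of $V$ and by the configuration of these elementary transformations, and checking that each stratum is, Zariski-locally over the relevant configuration space, a trivial fibration so that the virtual Hodge function $\CI(\Gamma,x,y;f)$ multiplies along the stratification, one obtains a factorisation of $H_{r,c_1}(u,v,\tau;f,\Sigma_{g,\ell})$ as the virtual Hodge class of the stack $\mathrm{Bun}_r(C_g)$ of all rank $r$ bundles on $C_g$ times a generating function, graded by $c_2$, that enumerates the elementary-transformation configurations (the overall monomial in $q$, $x$, $y$ being fixed by comparing leading terms).

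\emph{Identifying the two factors.} The first factor is the Atiyah--Bott/Harder--Narasimhan class of $\mathrm{Bun}_r(C_g)$ in the Hodge-refined form \eq{eq:totalset2} due to Earl: in the limit $q\to0$ one has $\theta_1(\alpha,\tau)\sim c\,(x^{\alpha/2}-x^{-\alpha/2})$ and $\theta_1(\alpha+\tfrac12,\tau)\sim c'\,(x^{\alpha/2}+x^{-\alpha/2})$ with $x=e^{2\pi iu}$, $y=e^{2\pi iv}$, so that the right-hand side of \eq{eq:restrictfibre2} degenerates, up to a monomial, to \eq{eq:totalset2} — and for $g=0$ to \eq{eq:totalset}, consistently with Conjecture~\ref{conj:restrictfibre}. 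This already accounts for the numerator $\prod_{j=1}^r\theta_1(ju+(j-1)v+\tfrac12,\tau)^g\,\theta_1((j-1)u+jv+\tfrac12,\tau)^g$ and for the pole structure $\theta_1(r(u+v),\tau)^{-1}\prod_{k=1}^{r-1}\theta_1(k(u+v),\tau)^{-2}$. The second factor is the $q$-deformation: as in the $r=2$ identity \eq{eq:setrestrict}, summing over elementary-transformation configurations produces a product of motivic zeta functions of $\Sigma_{g,\ell}$ evaluated at arguments shifted by powers of $x$, $y$ and $q$, and G\"ottsche's formula for the Hodge numbers of the Hilbert schemes $\Sigma_{g,\ell}^{[n]}$, together with the Jacobi triple product, repackages these — convolved with the base-curve factor — into the full right-hand side of \eq{eq:restrictfibre2}, in particular supplying the powers of $\eta(\tau)$ that promote \eq{eq:totalset2} to a modular infinite product.

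\emph{Main obstacle.} I expect the difficulty to be twofold. First, the structural step for $r\geq 3$: at a jumping fibre $E_{|f}$ can be $\bigoplus_i\CO(a_i)$ with several distinct $a_i$ subject to $\sum_i a_i=0$, so the elementary-transformation bookkeeping is genuinely iterated, and one must show the stratification of $\mathfrak{M}_f(\Gamma;\Sigma_{g,\ell})$ is uniform enough for the virtual Hodge functions to multiply — exactly the analysis of $\mathfrak{M}_f(\Gamma)$ that the present paper does not carry out. Second, even granting the stratification, the combinatorial matching is a theta-function identity generalising \eq{eq:setrestrict}, namely that the configuration sum weighted by the surface zeta function, combined with the $\mathrm{Bun}_r(C_g)$ factor, equals the $\eta$--$\theta_1$ product in \eq{eq:restrictfibre2}; I would organise this as a single generating-function identity and corroborate it against the cases $r=1$ (where it must reproduce the Hodge-refined G\"ottsche formula for $\Sigma_{g,\ell}$), $g=0$ (Conjecture~\ref{conj:restrictfibre}), and against the expected invariance under $(u,v)\mapsto(-u,-v)$ and under modular transformations of $\tau$.
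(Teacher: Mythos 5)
You should first be clear about the status of the statement: in the paper it is a \emph{conjecture}, and no proof of it is given. The paper motivates (\ref{eq:restrictfibre2}) only by noting that its $q\to 0$ limit reproduces the Earl--Kirwan virtual Hodge function (\ref{eq:totalset2}) of the stack of rank-$r$ bundles on $C_g$, that its $g=0$ specialization is Conjecture \ref{conj:restrictfibre} (itself established only for $r\leq 2$, via Göttsche and Yoshioka), and by consistency checks at $g=0$, $r=3,4$ with wall-crossing and the blow-up formula; the paper states explicitly that no derivation based on $\mathfrak{M}_f(\Gamma)$ is presented for $r\geq 3$. So there is no proof in the paper against which to compare your route. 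Within that caveat, your outline is the natural one and is faithful to the $r=2$ derivation recalled in the paper: the vanishing branch for $c_1\cdot f\neq 0\bmod r$ is indeed non-conjectural and your argument for it is the paper's; the twisting reduction to $c_1\cdot f=0$, the push-forward $V=\pi_*E$ with generically isomorphic counit $\pi^*V\to E$, and the identification of the $q^0$ term with (\ref{eq:totalset2}) via $\theta_1(\alpha,\tau)\sim iq^{1/8}(e^{\pi i\alpha}-e^{-\pi i\alpha})$ and $\theta_1(\alpha+\tfrac12,\tau)\sim e^{\pi i\alpha}+e^{-\pi i\alpha}$ are all correct and consistent with how the conjecture was arrived at.

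As a proof, however, the proposal is incomplete, and the gaps are exactly where the content lies — as you yourself acknowledge. (a) The structural step for $r\geq 3$: that $\mathfrak{M}_f(\Gamma;\Sigma_{g,\ell})$ stratifies by the isomorphism class of $V$ and by configurations of iterated elementary transformations at jumping fibres with $E_{|f}=\bigoplus_i\CO(a_i)$, $\sum_ia_i=0$, with each stratum a Zariski-locally trivial fibration so that virtual Hodge functions multiply, is asserted, not proved; this is precisely the analysis of $\mathfrak{M}_f$ that is missing from the literature and from the paper. (b) Even granting (a), the closing of the configuration sum into the $\eta$--$\theta_1$ quotient of (\ref{eq:restrictfibre2}) is not a known identity you can quote; generalizing (\ref{eq:setrestrict}) beyond $r=2$ and convolving with the $\mathrm{Bun}_r(C_g)$ factor is essentially equivalent to the conjecture itself, and your proposed corroborations ($r=1$, $g=0$, symmetry and modularity) are sanity checks, not a derivation. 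So your submission should be read as a plausible proof strategy plus supporting evidence — which is in fact more than the paper supplies for this statement — but not as a proof of Conjecture \ref{conj:restrictfibre2}.
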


\section{BPS invariants of Hirzebruch surfaces}
\label{sec:ruledsurface}

\subsection{BPS invariants for a suitable polarization}
\label{subsec:suitable}

This subsection computes for $c_1\cdot f= 0 \mod r$ the
BPS invariants of $\Sigma_\ell$ for a polarization $J\in
C(\Sigma_\ell)$ sufficiently close to $J_{0,1}=f$. The BPS invariants
are for this choice of $J$ independent of $\ell$.  
 ``Sufficiently close''  depends on the topological classes of
the sheaf. Generalizing Def. 5.3.1 of \cite{Huybrechts:1996} to
general $r\geq 1$, we define a $\Gamma$-suitable polarization by:
\begin{definition}
\label{def:suitable}
A polarization $J$ is called $\Gamma$-suitable if and only if:
\begin{itemize}
\item[-] $J$ does not lie on a wall for $\Gamma=(r,\mathrm{ch}_1,\mathrm{ch}_2)$ and,
\item[-] for any $J$-semi-stable subsheaf $F'\subset F$ with $\Gamma(F)=\Gamma$, $(\mu(F')-\mu(F))\cdot f=0$ or
$(\mu(F')-\mu(F))\cdot f$ and $(\mu(F')-\mu(F))\cdot J$ have the
same sign.
\end{itemize} 
\end{definition}

We will keep the dependence on the Chern classes implicit in the
following and denote a suitable polarization by $J_{\varepsilon,1}$
with $\varepsilon$ positive but sufficienty small. From the definition follows that if $J_{\varepsilon,1}$ is a $\Gamma(F)$-suitable polarization,
and $F_{|f}$ is unstable, then $F$ is $\mu$-unstable. 
Thus we need to subtract from $M_f(\Gamma)$, i.e. the set of sheaves
with topological classes $\Gamma$ whose restriction to the fibre $f$
is semi-stable, the subset of $M_f(\Gamma)$ which is Gieseker unstable for
$J_{\varepsilon,1}$. We continue by explaining this for
$r=2$. Then the general formula is proposed for the invariant
enumerating extended HN-filtrations, which is consequently applied to $r=3$. 

A crucial tool to obtain the invariants enumerating semi-stable
sheaves are Harder-Narasimhan filtrations \cite{Harder:1975}, which can
be defined for either Gieseker or $\mu$-stability. To define these filtrations, let
$\varphi$ denote either Gieseker, $\varphi(F)=p_J(F,n)$, or $\mu$-stability, $\varphi(F)=\mu(F)\cdot J$. Then:
\begin{definition}\label{def:HNfiltr} A Harder-Narasimhan filtration (HN-filtration) with respect to the
  stability condition $\varphi$ is a filtration $0\subset F_1 \subset F_2\subset \dots \subset F_\ell=F$ of the sheaf
  $F$ such that  the quotients $E_i=F_i/F_{i-1}$ are semi-stable with
  respect to $\varphi$ and satisfy $\varphi(E_i)>\varphi(E_{i+1})$ for all $i$.
\end{definition}
Since $\mu$-stability is coarser then Gieseker stability, the length $\ell_{\mathrm{G}}(F)$
of the HN-filtration with respect to Gieseker stability is in general
larger than the length $\ell_\mu(F)$ of its
HN-filtration with respect to $\mu$-stability.

Using the additive and multiplicative properties of motivic invariants
discussed below (\ref{eq:BPSinvariant}), one can determine the BPS invariants for a suitable polarization.
The Poincar\'e function of the stack of HN-filtrations with respect to
Gieseker stability and prescribed $\Gamma_i=\Gamma(E_i)$ is \cite{Yoshioka:1996}:  
\be
\label{eq:setHN}
w^{-\sum_{i<j} r_ir_j (\mu_j-\mu_i)\cdot K_S} \prod_{i=1}^\ell \CI(\Gamma_i,w;J),
\ee
where $r_ir_j (\mu_j-\mu_i)\cdot K_S$ is the Euler form for
semi-stable sheaves on the projective surface $S$. One could define a
similar function for the stack of filtrations with respect to
$\mu$-stability. For the generalization to Hodge numbers, one replaces
$w^2$ by $xy$ in $w^{-\sum_{i<j} r_ir_j (\mu_j-\mu_i)\cdot K_S}$ and
$\CI(\Gamma_i,w;J)$ by $\CI(\Gamma_i,x,y;J)$.
 
For $(r,c_1)=(2,f)$, the only HN-filtrations with respect to $J_{\varepsilon,1}$ have length $\ell_\mathrm{G}=2$. Denoting
$c_1(E_2)=bC-af$, and thus $c_1(E_1)=-bC+(a+1)f$, one easily verifies 
that the HN-filtrations correspond to $a\geq 0$ and $b=0$. Since $b=0$
the dependence of $K_S$ in Eq. (\ref{eq:setHN}) does not lead to a
dependence on $\ell$. Using that Eq. (\ref{eq:rank1}) is also the generating function of
$\CI(\Gamma,w;J)$ for $r=1$, Eq. (\ref{eq:setHN}) becomes:
\be
\label{eq:contHN1}
\sum_{a \geq 0} w^{-2(2a+1)}\, h_{1,0}(z,\tau)^2 =- \frac{w^2}{1-w^4}\,h_{1,0}(z,\tau)^2,
\ee
where we assumed $|w|>1$. Subtracting this from Eq. (\ref{eq:restrictfibre}) for $r=2$ gives:
\be
h_{2,f}(z,\tau; J_{\varepsilon,1})=\frac{-1}{\theta_1(2z,\tau)^2\,\eta(\tau)^2}\left(\frac{i\,\eta(\tau)^3}{\theta_1(4z,\tau)}+\frac{w^2}{1-w^4}\right),
\ee
which is easily verified to enumerate invariants
$\bar \Omega(\Gamma;J_{\varepsilon,1})$ satisfying the expected properties
mentioned below Eq. (\ref{eq:BPSinvariant}).   

For $(r,c_1)=(2,0)$, the HN-filtrations with respect to $J_{\varepsilon,1}$ and $\ell_\mathrm{G}=2$ split naturally in two
subsets: the first set has length
$\ell_\mu=2$ with respect to $\mu$-stability, and the second set has $\ell_\mu=1$.
Similarly to (\ref{eq:contHN1}), the first set gives rise to: 
\be
\label{eq:r2unset}
-\frac{1}{1-w^4}\,h_{1,0}(z,\tau)^2 ,
\ee
and the second set to:
\be
\label{eq:r2secset}
\frac{1}{2}h_{1,0}(z,\tau)^2-\frac{1}{2} \sum_{n\geq 0} \Omega((1,0,n),w)^2\,q^{2n},
\ee
where the second term subtracts from the first the Gieseker
semi-stable sheaves which should not be subtracted from
$H_{2}(z,\tau;f)$. Subtraction of Eqs. (\ref{eq:r2unset}) and
(\ref{eq:r2secset}) from $H_{2}(z,\tau;f)$ gives the generating function of
$\CI(\,(2,0,c_2),w;J)$, which corresponds by Eq. (\ref{eq:stackinvariant}) to:
\be
\label{eq:h20}
h_{2,0}(z,\tau;
J_{\varepsilon,1})=\frac{-1}{\theta_1(2z,\tau)^2\,\eta(\tau)^2}
\left(\frac{i\,\eta(\tau)^3}{\theta_1(4z,\tau)}+\frac{1}{1-w^4}-\frac{1}{2}\right),
\ee
Again one can verify that the invariants satisfy the expected
integrality properties. Remark 4.6 of Ref. \cite{Yoshioka:1995} determines the Betti numbers of the
intersection cohomology of the singular moduli spaces and arrives at
the same generating function (\ref{eq:h20}). 

The Betti numbers for the intersection cohomology of the moduli space
of semi-stable vector bundles on Riemann surfaces were earlier computed in
Ref. \cite{Kirwan:1986}. The above procedure gives these Betti numbers
with much less effort. For example, one can easily verify that 
\be
H_2(z,C_g)+\left(\frac{1}{1-w^4}-\frac{1}{2} \right)H_1(z,C_g)^2,
\ee
with $H_r(z,C_g)$ as in Eq. (\ref{eq:totalset}), is equivalent with
Proposition 5.9 in the Corrigendum to \cite{Kirwan:1986}.

%

Since the invariants $\CI(\Gamma,w;J)$ are not so compatible with 
modular generating functions for $r\geq 2$, it is useful to work as much as possible
with the invariants $\bar \Omega(\Gamma,w;J)$. To this end an
extension of the HN-filtration is necessary: 
\begin{definition}
\label{def:extHNfiltr}
An {\rm extended} Harder-Narasimhan filtration (with respect to Gieseker stability) is a filtration $0\subset F_1\subset F_2\subset
\dots \subset F_\ell=F$ whose quotients $E_i=F_i/F_{i-1}$ are
semi-stable and satisfy $p_J(E_i,n)\succeq p_J(E_{i+1},n)$.
\end{definition}

An example of an extended Harder-Narashimhan filtration can be
obtained by considering a Jordan-H\"older filtration of the semi-stable
quotients of a standard HN-filtration. Recall that a Jordan-H\"older
filtrations is a filtration $0\subset F_1\subset F_2\subset
\dots \subset F_\ell=F$ of a semi-stable bundle $F$ such that the
quotients $E_i=F_i/F_{i-1}$ are stable and satisfy
$p_J(E_i,n)=p_J(F,n)$. However, not all extended HN-filtrations are
obtained this way since Definition \ref{def:extHNfiltr} allows for
semi-stable quotients.

From Eq. (\ref{eq:stackinvariant}) follows that the natural invariant
$\bar \Omega(\{ \Gamma_i \},w; J)$ associated to the stack
$\mathfrak{M}_J(\{\Gamma_i\})$ of extended HN-filtrations with
prescribed Chern classes $\Gamma_i=\Gamma(E_i)$ is: 
\be 
\label{eq:setfiltration}
\bar \Omega(\{ \Gamma_i \};w, J):=\frac{1}{|\mathrm{Aut} (\{\Gamma_i \};J) | }w^{-\sum_{i<j} r_ir_j
(\mu_j-\mu_i)\cdot K_S} \prod_{i=1}^\ell \bar \Omega(\Gamma_i,w; J).
\ee
The number $|\mathrm{Aut} (\{\Gamma_i \};J)|$ is equal to
$\prod_am_a!\,$, where $m_a$ is the total number of quotients $E_i$
with equal reduced Hilbert polynomial
$p_J(E_{a},n)$. Thus only for HN-filtrations $|\mathrm{Aut} (\{\Gamma_i
\};J)|=1$.

If the sum over all extended HN-filtrations contains a group $\{ E_i \}$ with equal $p_J(E_{i},n)$ but unequal $\Gamma_i$,
the factor $\frac{1}{|\mathrm{Aut} (\{\Gamma_i \};J)|}$ divides out a
number of permutations. To avoid this overcounting,
one could introduce a further ordering on the vectors $\Gamma_i$,
which should be obeyed by the set of filtrations to be summed over. Then
one would divide by $|\mathrm{Aut}(\{\Gamma_i\})|=\prod_{p} n_p!$,
where $n_p$ is the number of equal vectors $\Gamma_p$ appearing among
the $\Gamma_i$, $i=1,\dots,\ell$. This is the origin of  the ``Boltzmann statistics''
in wall-crossing formulas \cite{Manschot:2010qz} in the work of Joyce \cite{Joyce:2004}. 

The functions $h_{r,c_1}(z,\tau;J_{\varepsilon,1})$ with $c_1\cdot
f=0\mod r$ are given by the recursive formula
\be
\label{eq:recursion}
h_{r,c_1}(z,\tau;J_{\varepsilon,1})=H_{r,c_1}(z,\tau;f)-\sum_{\mathrm{ch}_2}\sum_{
  \Gamma_1+\dots+\Gamma_\ell=(r,c_1,\mathrm{ch}_2) \atop p_J(\Gamma_i,n)\succeq p_J(\Gamma_{i+1},n),\, \ell>1} 
\bar \Omega(\{ \Gamma_i \};w,
J_{\varepsilon,1})\,q^{r\Delta(\Gamma)-\frac{r\chi(S)}{24}},
\ee
with $\Delta(\Gamma)$ given in terms of $\Gamma_i$ by
Eq. (\ref{eq:discriminant}) and $H_{r,c_1}(z,\tau;f)$ defined by Eq. (\ref{eq:restrictfibre}). 

We continue by applying Eq. (\ref{eq:setfiltration}) to compute
$h_{3,c_1}(z,\tau;J_{1,\varepsilon})$, with $c_1=f$ and $0$. One obtains:   
\begin{proposition}
\begin{eqnarray}
\label{eq:h31eps}
h_{3,f}(z,\tau;J_{\varepsilon,1})&=&\frac{i\,\eta(\tau)^3}{\theta_1(2z,\tau)^2\,\theta_1(4z,\tau)^2\,\theta_1(6z,\tau)}
+\frac{w^2+w^4}{1-w^6}\frac{1}{\theta_1(2z,\tau)^3\,\theta_1(4z,\tau)}  \\
&&-\frac{w^4}{(1-w^4)^2}\frac{i}{\theta_1(2z,\tau)^3\eta(\tau)^3}, \non
\end{eqnarray}
\begin{eqnarray}
\label{eq:h30eps}
h_{3,0}(z,\tau;J_{\varepsilon,1})&=&\frac{i\,\eta(\tau)^3}{\theta_1(2z,\tau)^2\,\theta_1(4z,\tau)^2\,\theta_1(6z,\tau)}
+\frac{1+w^6}{1-w^6}\,\frac{1}{\theta_1(2z,\tau)^3\,\theta(4z,\tau)}\\
&& -
\left( \frac{w^4}{(1-w^4)^2}+\frac{1}{3}\right)\,\frac{i}{\theta_1(2z,\tau)^3\,\eta(\tau)^3}. \non 
\end{eqnarray}

\end{proposition}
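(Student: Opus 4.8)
The plan is to solve the recursion \eqref{eq:recursion} for $r=3$ and $c_1\in\{f,0\}$, taking Conjecture \ref{conj:restrictfibre} for the input $H_{3,c_1}(z,\tau;f)$ and using the already-established generating functions \eqref{eq:rank1} and $h_{2,c_1}(z,\tau;J_{\varepsilon,1})$ for the rank $\leq 2$ quotients. The first step is to classify the extended HN-filtrations $\Gamma_1+\dots+\Gamma_\ell=(3,c_1,\mathrm{ch}_2)$, $\ell>1$, that actually occur for a $\Gamma$-suitable polarization $J_{\varepsilon,1}$. Generalizing the $r=2$ discussion around \eqref{eq:contHN1}, I would first show that Definition \ref{def:suitable} together with semi-stability of $F_{|f}$ forces $c_1(E_i)$ to be a multiple of $f$ for every quotient $E_i$; since then $\mu_j-\mu_i$ is an $f$-multiple and $f\cdot K_{\Sigma_\ell}=-2$ is independent of $\ell$, the Euler-form exponent $-\sum_{i<j}r_ir_j(\mu_j-\mu_i)\cdot K_{\Sigma_\ell}$ in \eqref{eq:setfiltration} is independent of $\ell$, which is exactly why the answer does not depend on $\ell$. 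For $r=3$ the rank partitions are $3=2+1$, $3=1+2$ and $3=1+1+1$; I would parametrize $c_1(E_i)$ by integers as in the $r=2$ case and use the ordering $p_J(E_i,n)\succeq p_J(E_{i+1},n)$, evaluated for $J$ close to $f$, to pin down which Chern-class data and which orderings contribute.

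For each admissible filtration type the contribution is read off from \eqref{eq:setfiltration}: the combinatorial factor $\tfrac{1}{|\mathrm{Aut}(\{\Gamma_i\};J)|}$, the Euler-form monomial in $w$, and the product $\prod_i\bar\Omega(\Gamma_i,w;J_{\varepsilon,1})$, where a rank-$2$ quotient is replaced by $h_{2,c_1}(z,\tau;J_{\varepsilon,1})$ and each rank-$1$ quotient by $h_{1,0}(z,\tau)$. Summing over the free integer ``gap'' parameters of the filtration, as in the geometric series $\sum_{a\geq 0}w^{-2(2a+1)}=-\tfrac{w^2}{1-w^4}$, produces the rational prefactors: the single sum in the $2+1$ stratum gives the factor $\tfrac{w^2+w^4}{1-w^6}$ (resp.\ $\tfrac{1+w^6}{1-w^6}$) multiplying $\tfrac{1}{\theta_1(2z,\tau)^3\theta_1(4z,\tau)}$, and the double sum in the $1+1+1$ stratum gives $\tfrac{w^4}{(1-w^4)^2}$ multiplying $h_{1,0}(z,\tau)^3\propto\tfrac{i}{\theta_1(2z,\tau)^3\eta(\tau)^3}$. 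Subtracting the total from $H_{3,c_1}(z,\tau;f)$ and collecting terms with the product expansions \eqref{eq:etatheta} of $\eta$ and $\theta_1$ then yields \eqref{eq:h31eps} and \eqref{eq:h30eps}; alternatively one may invoke the Zagier-type closed-form solution of the recursion mentioned in the introduction.

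For $c_1=0$ the bookkeeping is more delicate, exactly as in the passage from $h_{2,f}$ to $h_{2,0}$ via \eqref{eq:r2unset}--\eqref{eq:h20}: the $2+1$ stratum must be split according to whether the rank-$2$ quotient is $\mu$-stable ($\ell_\mu=2$) or only $\mu$-semi-stable ($\ell_\mu=1$), and the $1+1+1$ stratum must be split according to how many of the three line-bundle quotients carry the same reduced Hilbert polynomial. The resulting overcounting corrections, governed by the $\tfrac{1}{|\mathrm{Aut}(\{\Gamma_i\})|}=\prod_p n_p!$ Boltzmann factors and by adding back the $\mu$-semi-stable-but-Gieseker-semi-stable configurations that should not be removed from $H_{3,0}(z,\tau;f)$, are precisely what produces the constant $\tfrac13$ and the combination $\tfrac{w^4}{(1-w^4)^2}+\tfrac13$ in \eqref{eq:h30eps}. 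I would then cross-check the outputs against the consistency conditions recalled in the introduction: integrality of the $\bar\Omega(\Gamma,w;J_{\varepsilon,1})$ obtained through \eqref{eq:stackinvariant}, the $w\leftrightarrow w^{-1}$ symmetry of the associated Poincar\'e polynomials, and agreement of the leading $q$-coefficient with $H_3(z;C_0)$ from \eqref{eq:totalset}.

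The main obstacle is the combinatorial classification in the first step: correctly enumerating every extended HN-stratum for a suitable polarization, keeping careful track of the $\mu$-stability versus Gieseker-stability lengths so that exactly the right configurations are subtracted, and assigning the correct $|\mathrm{Aut}|$ factors when several quotients share the same reduced Hilbert polynomial. Once the strata and their multiplicities are fixed, the rest---evaluating \eqref{eq:setfiltration}, summing the geometric series in $w$, and simplifying the $\theta_1$/$\eta$ expressions---is routine.
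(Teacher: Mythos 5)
Your proposal follows essentially the same route as the paper's proof: enumerate the extended HN strata for the suitable polarization $J_{\varepsilon,1}$ (split by rank partition $2+1$, $1+2$, $1+1+1$ and by $\ell_\mu$ versus $\ell_{\mathrm{G}}$, with the $|\mathrm{Aut}(\{\Gamma_i\};J)|$ factors), evaluate each stratum via \eqref{eq:setfiltration} as geometric series in $w$ multiplying $h_{1,0}$ and $h_{2,c_1}(z,\tau;J_{\varepsilon,1})$, subtract the total from the conjectural $H_{3,c_1}(z,\tau;f)$, and check integrality and $w\leftrightarrow w^{-1}$ symmetry. One bookkeeping caveat: the closed-form prefactors $\frac{w^4}{(1-w^4)^2}$ and the constant $\frac13$ do not come from a single stratum as your attribution suggests (in the paper the raw $\ell=\ell_\mu=3$ sum gives $\frac{w^4}{(1-w^4)(1-w^{12})}$, and the final coefficients emerge only after recombining with the rational parts of $h_{2,0}$, $h_{2,f}$ and the half-integer $\mu$-corrections), but since you collect terms at the end this is presentational rather than a gap.
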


\begin{proof} We start by proving Eq. (\ref{eq:h31eps}). Denote the
  length of an extended HN-filtration by $\ell$, its length with respect to $\mu$-stability
by $\ell_\mu$ and Gieseker stability $\ell_\mathrm{G}$. We first consider the unstable filtrations
with $\ell=\ell_\mu=2$, and parametrize $c_1(E_2)$ by $bC-af$. These are parametrized by $a\geq 0$
and $b=0$. There are four possibilities to be distinguished: whether $r(E_1)=1$ or 2, and
whether the quotient with rank 2 has $c_1=0$ or $f\mod 2$. Adding up 
these contributions, one obtains:
\be
\label{eq:h31cont1}
-\frac{w^4+w^{8}}{1-w^{12}}\,h_{1,0}(z,\tau)\,h_{2,0}(z,\tau;J_{\varepsilon,1})-\frac{w^2+w^{10}}{1-w^{12}}\,h_{1,0}(z,\tau)\,h_{2,f}(z,\tau;J_{\varepsilon,1}),
\ee
The filtrations with $\ell=3$ consist of 3 subsets: one 
set with $\ell_\mu=3$, one with $\ell_\mu=2$ but $\ell_\mathrm{G}=3$,
and one with $\ell_\mathrm{G}=2$. Parametrizing $c_1(E_i)=b_iC-a_if$, the first set is parametrized by
$a_i-a_{i+1}>0$, $\sum_{i=1}^3 a_i=1$ and $b_i=0$. 
These are counted by:
\be
\label{eq:h31cont2}
\sum_{k_1,k_2>0\atop k_2=k_1-1\mod 3} w^{-4(k_1+k_2)}\,h_{1,0}(z,\tau)^3=\frac{w^4}{(1-w^4) (1-w^{12})}\,h_{1,0}(z,\tau)^3.
\ee
For the second and third sets, one needs to distinguish between
equality of the stability condition of $E_2$ with $E_1$ or
$E_3$. These two sets are enumerated by:
\be
\label{eq:muns}
-\frac{1}{2}\frac{w^4+w^8}{1-w^{12}}\,h_{1,0}(z,\tau)^3.
\ee
Note that the factor
$\frac{1}{|\mathrm{Aut}(\{\Gamma_i\},J_{1,\varepsilon})|}$ naturally
combines the contributions of filtrations with $\ell_\mu<\ell$. Another observation is that the term $-\frac{1}{2}$ in the second
factor of $h_{2,0}(z,\tau;J_{\varepsilon,1})$ (\ref{eq:h20}) cancels against (\ref{eq:muns}) in the total sum. 
After subtraction of the terms (\ref{eq:h31cont1})-(\ref{eq:muns})
from Eq. (\ref{eq:restrictfibre}) for $r=3$, and writing the whole series
in terms of modular functions, one obtains (\ref{eq:h31eps}).  

For $(r,c_1)=(3,0)$, one needs to subtract the following terms:
\begin{itemize}
\item[-] due to unstable filtrations with $\ell=\ell_\mu=2$:
\be
-\frac{2}{1-w^{12}}\, h_{1,0}(z,\tau)\,h_{2,0}(z,\tau;J_{\varepsilon,1})-\frac{2w^6}{1-w^{12}}\, h_{1,0}(z,\tau)\,h_{2,f}(z,\tau;J_{\varepsilon,1}),\non
\ee
\item[-] due to unstable filtrations with $\ell=2$, $\ell_\mu=1$ and
  $\ell_\mathrm{G}=1$ or 2:
\be
\frac{2}{2}\,h_{1,0}(z,\tau)\,h_{2,0}(z,\tau;J_{\varepsilon,1}), \non
\ee
\item[-] due to unstable filtrations with $\ell=\ell_\mu=3$:
\be
\frac{1+w^{12}}{(1-w^8)(1-w^{12})}\,h_{1,0}(z,\tau)^3, \non
\ee
\item[-]  due to unstable filtrations with $\ell=3$, $\ell_\mu=2$ and
  $\ell_\mathrm{G}=2$ or 3:
\be
-\frac{2}{2}\frac{1}{1-w^{12}}\, h_{1,0}(z,\tau)^3, \non
\ee
\item[-] due to unstable filtrations with $\ell=3$, $\ell_\mu=1$ and
  $1\leq \ell_\mathrm{G}\leq 3$:
\be
\frac{1}{6}\, h_{1,0}(z,\tau)^3.\non
\ee
\end{itemize}

Subtracting the terms above from (\ref{eq:restrictfibre}) gives
(\ref{eq:h30eps}). Subtracting further  $\frac{1}{3}h_{1,0}(3z,3\tau)=\frac{i}{3\,\theta_1(6z,3\tau)\,\eta(3\tau)}$ from
(\ref{eq:h30eps}) provides integer invariants in agreement
with the definition (\ref{eq:refw}).
\end{proof}

The recursive procedure explained above can be solved, such that
$h_{r,c_1}(z,\tau;J_{\varepsilon,1})$ can be directly expressed in
terms of the $H_{r'}(z,\tau;f)$ with $r'\leq r$, without computing
first the $h_{r',c_1}(z,\tau;J_{\varepsilon,1})$, and moreover giving
more compact expressions. The solution follows from  Ref. \cite{Zagier:1996} (the solution
to the recursion for vector bundles over Riemann surfaces) 
and Eq. (\ref{eq:inversestackinv}) one obtains:
\begin{eqnarray}
h_{r,c_1}(z,\tau;J_{\varepsilon,1})&=&\sum_{(r_1,c_{1,1})+\dots+(r_\ell
  ,c_{1,\ell })    =(r,c_1), \atop \mu_i\cdot J_{\varepsilon,1} \geq \mu_{i+1}\cdot J_{\varepsilon,1}}
\frac{(-1)^{m-1}}{m} w^{-\sum_{i<j} r_ir_j(\mu_j-\mu_i) \cdot K_S}
\prod_{i=1}^m H_{r_i,0}(z,\tau;f) \non \\
&=& \sum_{(r_1,a_1)+\dots+(r_\ell, a_\ell )    =(r,c_1\cdot C), \atop
  a_i\geq a_{i+1}}
\frac{(-1)^{m-1}}{m} w^{-2\sum_{i<j} r_ir_j(a_j-a_i)}
\prod_{i=1}^m H_{r_i,0}(z,\tau;f)  
\end{eqnarray}
This becomes after carrying out the sums over $a_i$ \cite{Zagier:1996}:
\begin{eqnarray}
\label{eq:solvrecursion}
h_{r,-af}(z,\tau;J_{\varepsilon,1})
&=&\sum_{(r_1,a_1)+\dots+(r_m,a_m)=(r,a)\atop a_i/r_i=a/r}
\frac{(-1)^{m-1}}{m}  \non \\
&&\prod_{i=1}^m  \left( \sum_{  r_1+\dots 
  +r_\ell=r_i} \frac{w^{2M(r_1,\dots ,r_\ell;a_i/r_i)}}{\left(1-w^{2(r_1+r_2)}\right)\dots
  \left(1-w^{2(r_{\ell-1}+r_\ell)}\right)} H_{r_1,0}(z,\tau;f) \dots
H_{r_\ell,0}(z,\tau;f)\right), 
\end{eqnarray}
where
\be
M(r_1,\dots ,r_\ell;\lambda)=\sum_{j=1}^{\ell-1}(r_j+r_{j+1})\,\{
(r_1+\dots + r_j) \lambda \} ,
\ee
with $\{ \lambda \}:=\lambda-\lfloor \lambda \rfloor$. 
 
One can verify that Eq. (\ref{eq:solvrecursion}) for $r=3$  is in
agreement with Eqs. (\ref{eq:h31eps}) and (\ref{eq:h30eps}). 
As an example we give here $h_{4,0}(z,\tau;J_{\varepsilon,1})$:
\begin{eqnarray}
h_{4,0}(z,\tau)&=&H_{4,0}(\tau,z;f)+\frac{1}{2}\frac{1+w^8}{1-w^8}\,H_{2,0}(\tau,z;f)^2+\frac{1+w^8}{1-w^8}\,H_{1,0}(\tau,z;f)\,
H_{3,0}(\tau,z;f)\non \\
&&+\frac{1-w^{16}}{(1-w^4)(1-w^6)^2}\,H_{1,0}(\tau,z;f)^2\,H_{2,0}(\tau,z;f)+\frac{1}{4}\frac{1-w^{16}}{(1-w^4)^4}H_{1,0}(\tau,z;f)^4,
\end{eqnarray}
which is to be compared with:
\begin{eqnarray}
h_{4,0}(z,\tau)&=&H_{4,0}(\tau,z;f)-\left( -\frac{w^{12}}{(1-w^8)\,(1-w^{12})^2}
  +\frac{1}{2}\frac{1+w^{24}}{(1-w^{12})\, (1-w^{24})}\right. \non \\
&&\left.\qquad  -\frac{1}{3}\frac{1}{1-w^{24}}-\frac{1}{4}\frac{1}{1-w^{16}}+\frac{1}{24}\right)h_{1,0}(z,\tau)^4\non\\
&&-\left(\frac{2(1+w^{20})}{(1-w^{16})\,(1-w^{24})}+\frac{1+w^{24}}{(1-w^{12})\,(1-w^{24})}-\frac{2}{1-w^{24}} \right. \non \\
&& \left.\qquad  -\frac{1}{1-w^{16}}+\frac{1}{2} \right) h_{1,0}(z,\tau)^2 \,h_{2,0}(z,\tau;J_{\varepsilon,1})\non
\\
&&-\left(
  \frac{2\,(w^{10}+w^{30})}{(1-w^{16})\,(1-w^{24})}+\frac{2\,w^{18}}{(1-w^{12})\,(1-w^{24})}
\right)\,h_{1,0}(z,\tau)^2 \,h_{2,f}(z,\tau;J_{\varepsilon,1})\non \\
&&-\left( -\frac{1}{1-w^{16}}+\frac{1}{2}\right)\,
h_{2,0}(z,\tau;J_{\varepsilon,1})^2 - \left( -\frac{w^8}{1-w^{16}}\right)\,h_{2,f}(z,\tau;J_{\varepsilon,1})^2\non \\
&&-\left(-\frac{2}{1-w^{24}} +1\right)\,
h_{1,0}(z,\tau)\,h_{3,0}(z,\tau;J_{\varepsilon,1})\non \\
&& -\left(-\frac{2\,(w^8+w^{16})}{1-w^{24}} \right) \,
h_{1,0}(z,\tau)\,h_{3,0}(z,\tau;J_{\varepsilon,1}).\non
\end{eqnarray}

\subsection{Wall-crossing} 
This subsection explains how to compute $h_{r,c_1}(z,\tau;J)$ for a
generic choice of polarization $J$ from the generating functions for
$J=J_{\varepsilon,1}$. The BPS invariants $\Omega(\Gamma,w;J)$ for $J$ differ in
general from those for $J=J_{\varepsilon,1}$, since sheaves might
become semi-stable or unstable by changing the polarization. The
change of the BPS invariants depends on the Hirzebruch
surface $\Sigma_\ell$ through the canonical class
$K_{\Sigma_\ell}$. Knowing how $h_{r,c_1}(z,\tau;J)$ varies in the
ample cone $C(\Sigma_1)$ is
particularly important for the computation of $h_{r,c_1}(z,\tau;\bP^2)$ since the blow-up formula is
to be applied for the polarization $J_{1,0}=\phi^*H$, where $H$ is the hyperplane class of $\bP^2$ (see the next section). The
change of the invariants can be obtained recursively from Eq. (\ref{eq:setfiltration}) after determining which
filtrations change from semi-stable to unstable or vice versa. 
 
More quantitatively one has for $J$ and $J'$ sufficiently close:
\begin{eqnarray}
\label{eq:wallcrossing}
\Delta \bar \Omega(\Gamma,w;J\to
J')&=&\sum_{{\Gamma=\Gamma_1+\dots+\Gamma_\ell ,\atop p_{J'}(\Gamma_{i})\preceq
  p_{J'}(\Gamma_{i+1}),}
  \atop p_J(\Gamma_{i})\succeq
  p_J(\Gamma_{i+1})}\frac{1}{|\mathrm{Aut}(\{\Gamma_i\};J)|} \,
w^{-\sum_{i<j} r_ir_j (\mu_j-\mu_i)\cdot K_S}\prod_{i=1}^\ell \bar
\Omega(\Gamma_i;w,J)\non\\
&&-\sum_{{\Gamma=\Gamma_1+\dots+\Gamma_\ell ,\atop p_{J'}(\Gamma_{i})\succeq
  p_{J'}(\Gamma_{i+1}),}
  \atop p_J(\Gamma_{i})\preceq
  p_J(\Gamma_{i+1})}\frac{1}{|\mathrm{Aut}(\{\Gamma_i\};J')|} \, w^{-\sum_{i<j} r_ir_j (\mu_j-\mu_i)\cdot K_S}\prod_{i=1}^\ell\bar \Omega(\Gamma_i;w,J'),
\end{eqnarray}
with $|\mathrm{Aut}(\{\Gamma_i\};J)|$ defined below
Eq. (\ref{eq:setfiltration}). Note that the invariants are evaluated
on both sides of the wall. This makes this
formula a recursive formula as it requires knowledge of
$\Omega(\Gamma_i,w;J')$, but since we are only interested in small rank
this is not a serious obstacle. A solution to the recursion is given by
Theorem 6.24 of \cite{Joyce:2004}. Other ways to determine
$\Omega(\Gamma_i,w;J')$ in terms of $\Omega(\Gamma_i,w;J)$
is using a graded Lie algebra \cite{Kontsevich:2008} or the Higgs branch analysis of
Ref. \cite{Manschot:2010qz} based on Ref. \cite{Reineke:2002}.



Since generating functions capturing wall-crossing are already described in the literature,
the explicit expressions of $h_{r,c_1}(z,\tau;J_{m,n})$ for $r=2$ and $3$, are presented here without further details. We have for $r=2$ \cite{Yoshioka:1994, Gottsche:1996}:
\begin{eqnarray}
h_{2,\beta C -\alpha f}(z,\tau; J_{m,n})&=&h_{2,\beta C -\alpha
  f}(z,\tau;J_{\varepsilon,1})+\non \\
&&\half \sum_{a,b\in
  \mathbb{Z}}\half \left(\, \sgn((2b-\beta)n-(2a-\alpha )m )- \sgn((2b-\beta)-(2a-\alpha ) \varepsilon)\,  \right)\non\\
&&\times \left(w^{-(\ell-2)(2b-\beta)-2(2a-\alpha)}-w^{(\ell-2)(2b-\beta)+2(2a-\alpha)} \right)\,q^{\frac{\ell}{4}(2b-\beta)^2+\frac{1}{2}(2b-\beta)(2a-\alpha)}\,h_{1,0}(z,\tau)^2,\non
\end{eqnarray}
and for $r=3$ \cite{Manschot:2010nc, Manschot:2010xp}:
\begin{eqnarray}
h_{3,\beta C -\alpha f}(z,\tau; J_{m,n})&=&h_{3,\beta C -\alpha
  f}(z,\tau;J_{\varepsilon,1})+\non \\
&& \sum_{a,b\in
\mathbb{Z}}\half \left(\, \sgn((3b-2\beta)n-(3a-2\alpha )m )-
\sgn((3b-2\beta)-(3a-2\alpha ) \varepsilon)\,  \right) \non \\
&&\times
\left(w^{-(\ell-2)(3b-2\beta)-2(3a-2\alpha)}-w^{(\ell-2) (3b-2\beta)+2(3a-2\alpha)}
\right)\,q^{\frac{\ell}{12}(3b-2\beta)^2+\frac{1}{6}(3b-2\beta)(3a-2\alpha)}\non\\
&&\times h_{2,bC-af}(z,\tau;\Sigma_\ell, J_{|3b-2\beta|,|3a-2\alpha|})\,h_{1,0}(z,\tau).\non
\end{eqnarray}

\section{BPS invariants of $\bP^2$}
\label{sec:projplane}

The Hirzebruch surface $\Sigma_1$ can be obtained as a blow-up $\phi:\Sigma_1\to
\bP^2$ of the projective plane $\bP^2$. Interestingly, we can compute the BPS invariants of $\bP^2$
from those of $\Sigma_1$ from the blow-up formula.
This formula is a remarkable result which states that the ratio of generating functions of
BPS invariants of a surface $S$ and its blow-up $\tilde S$ is a (theta) function
independent of $S$ or $J$ \cite{Yoshioka:1996,
  Gottsche:1998, Li:1999}. The underlying reason for this relation is
that every semi-stable sheaf on $\tilde S$ can be obtained from one on $S$
by an elementary transformation along the exceptional divisor of the blow-up. 

Two subtle issues of the blow-up formula are (Proposition 3.4 of \cite{Yoshioka:1996}):
\begin{itemize}
\item[-] the stability condition is $\mu$-stability rather than Gieseker stability,
\item[-] it involves the virtual Poincar\'e functions $\CI(\Gamma,w;J)$ of the moduli stack.
\end{itemize} 

To take these two issues into account let $\bar
\Omega^\mu(\Gamma,w;J)$ be the invariant enumerating $\mu$-semi-stable 
sheaves which is obtained from $\bar \Omega^\mu(\Gamma,w;J)$ by
addition of the Gieseker unstable sheaves which are $\mu$-semi-stable
using Eq. (\ref{eq:setfiltration}). Moreover, let
$\CI^\mu(\Gamma,w;J)$ be the corresponding virtual Poincar\'e
function with corresponding generating function $H^\mu_{r,c_1}(z,\tau;\tilde S,J)$. 

The blow-up formula now reads \cite{Yoshioka:1996, Gottsche:1998, Li:1999}: 
\begin{proposition}
\label{prop:blowup}
Let $S$ be a smooth projective surface and $\phi: \tilde S \to S$ the
blow-up at a non-singular point, with $C_\mathrm{e}$ the exceptional
divisor of $\phi$. The generating functions $H^\mu_{r,c_1}(z,\tau;S,J)$ and
$H^\mu_{r,c_1}(z,\tau;\tilde S,J)$ are related by the ``blow-up
formula'':
\be
\label{eq:blowup}
H^\mu_{r,\phi^* c_1-kC_\mathrm{e}}(z,\tau;\tilde S,\phi^*J)=B_{r,k}(z,\tau)\, H^\mu_{r,c_1}(z,\tau;S,J),
\ee
with
\be
B_{r,k}(z,\tau)=\frac{1}{\eta(\tau)^r}\sum_{\sum_{i=1}^ra_i=0 \atop a_i \in \mathbb{Z}+\frac{k}{r}} q^{-\sum_{i<j}a_ia_j}w^{\sum_{i<j}a_i-a_j}.\non
\ee 
The blow-up formula for generating functions of Hodge numbers is
identical except with the replacement of $z$ by  $\half(u+v)$ in
$B_{r,k}(z,\tau)$.
\end{proposition}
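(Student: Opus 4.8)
The plan is to establish the blow-up formula \eqref{eq:blowup} by the standard ``elementary transformation'' argument, adapting the proof of Proposition 3.4 of \cite{Yoshioka:1996} to the setting of virtual Poincar\'e functions of moduli stacks. The starting point is the observation already recorded in the excerpt: every $\mu$-semi-stable sheaf $\tilde F$ on $\tilde S$ with $c_1(\tilde F)=\phi^*c_1 - kC_\mathrm{e}$ arises from a $\mu$-semi-stable sheaf $F$ on $S$ with $c_1(F)=c_1$ by an elementary transformation along the exceptional divisor $C_\mathrm{e}\cong \bP^1$. Concretely, one pushes forward: $\phi_*\tilde F = F$ (using $R^1\phi_*\tilde F = 0$ for $\mu$-semi-stable $\tilde F$ with the pulled-back polarization, since the fibres of $\phi$ are contracted), and $\tilde F$ is recovered from $\phi^* F$ by a sequence of elementary modifications along $C_\mathrm{e}$. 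The combinatorial data of such a modification is exactly a filtration of the restriction to $C_\mathrm{e}$, i.e.\ a choice of $r$ integers $a_i$ recording the jumps in the degrees of the line-bundle summands of $\tilde F|_{C_\mathrm{e}}\cong \bigoplus_i \CO(a_i)$, subject to $\sum a_i = -k$ (equivalently, after the obvious shift, $a_i\in\Z+\tfrac{k}{r}$ with $\sum a_i = 0$ as written in $B_{r,k}$).

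The next step is to compute how the discriminant and the dimension (hence the $w$- and $q$-gradings) change under such a modification. Using the twisting formulae \eqref{eq:ltwist} applied fibrewise over $C_\mathrm{e}$ together with $C_\mathrm{e}^2 = -1$, one finds that passing from $F$ on $S$ to $\tilde F$ on $\tilde S$ with the prescribed jump data $\{a_i\}$ changes $r\Delta$ by $\sum_{i<j} a_i a_j$ up to an overall shift, and changes $\dim_\C\CM$ (equivalently the exponent of $w$) by $-\sum_{i<j}(a_i-a_j)$; this is precisely what produces the monomial $q^{-\sum_{i<j}a_ia_j}\,w^{\sum_{i<j}a_i-a_j}$ inside $B_{r,k}$. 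The prefactor $\eta(\tau)^{-r}$ comes from the contribution to the virtual Poincar\'e function of the moduli of the elementary-transformation data itself: each of the $r$ ``jumps'' contributes a copy of the generating function counting points (with automorphism weights) of a $\bP^1$-bundle worth of modifications, which resums to an $\eta$-factor, exactly as the elementary-transformation count in \eqref{eq:setrestrict} produced zeta/eta factors in the fibre-restriction case. Assembling the stratification of $\mathfrak{M}^\mu_{\tilde S}(\phi^*c_1 - kC_\mathrm{e})$ by jump data $\{a_i\}$ and using the additive (motivic) and multiplicative properties of $\CI$ recorded below \eqref{eq:BPSinvariant}, one gets
\be
\CI^\mu(\Gamma(\tilde F),w;\phi^*J) = \sum_{\{a_i\}} (\text{modification factor}_{\{a_i\}})\cdot \CI^\mu(\Gamma(F),w;J),\non
\ee
and summing over $c_2$ to form generating functions turns the right-hand side into $B_{r,k}(z,\tau)\,H^\mu_{r,c_1}(z,\tau;S,J)$, which is \eqref{eq:blowup}.

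The main obstacle is not the bijection or the arithmetic of Chern classes --- those are classical --- but rather verifying that the argument goes through at the level of \emph{virtual Poincar\'e functions of the moduli stacks with respect to $\mu$-stability}, rather than just for fine moduli spaces in the coprime case. Two points need care: first, that $\mu$-semi-stability (not Gieseker) is preserved in \emph{both} directions of the elementary transformation, so that the stratification is clean --- this is where the hypothesis forces the use of $\CI^\mu$ and $\bar\Omega^\mu$ rather than their Gieseker counterparts, and is exactly the ``subtle issue'' flagged before the statement; second, that the automorphism-group factors (the stacky weights) match on the two sides, i.e.\ that the extra automorphisms introduced along $C_\mathrm{e}$ are accounted for precisely by the $\eta(\tau)^{-r}$ and the denominators hidden in the resummation. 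Since the paper is content to cite \cite{Yoshioka:1996, Gottsche:1998, Li:1999} for the geometric input, the proof here really amounts to (i) recalling the elementary-transformation correspondence, (ii) bookkeeping the gradings via \eqref{eq:ltwist} and $C_\mathrm{e}^2=-1$, and (iii) invoking the motivic additivity/multiplicativity of $\CI$ to pass from the stratification to the product formula; the Hodge-number version is then immediate by replacing $z$ with $\tfrac12(u+v)$ throughout, since every step is insensitive to the $(p,q)$-refinement.
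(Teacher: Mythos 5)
You should first note that the paper itself does not prove Proposition \ref{prop:blowup}: it is imported from the literature (Proposition 3.4 of \cite{Yoshioka:1996}, together with \cite{Gottsche:1998, Li:1999}), and the text only records the heuristic mechanism (every semi-stable sheaf on $\tilde S$ arises from one on $S$ by elementary transformations along $C_\mathrm{e}$) plus the two subtleties ($\mu$-stability and the stack invariants $\CI^\mu$). Your sketch follows exactly that mechanism, so in strategy it matches what the paper points to; the issue is that, judged as a proof, the load-bearing steps are asserted rather than established, and at least one of them is not correct as stated. The claim that $R^1\phi_*\tilde F=0$ for $\mu$-semi-stable $\tilde F$ with respect to $\phi^*J$ does not follow: $\phi^*J$ is only nef and satisfies $C_\mathrm{e}\cdot\phi^*J=0$, so $\mu$-semi-stability with respect to $\phi^*J$ imposes \emph{no} condition on the restriction $\tilde F|_{C_\mathrm{e}}\cong\bigoplus_i\CO(a_i)$; the splitting type is arbitrary and $R^1\phi_*\tilde F$ need not vanish. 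Precisely because the degenerate polarization sees nothing along $C_\mathrm{e}$, the statement that $\mu$-semi-stability is preserved in both directions of the correspondence and that the stratification by jump data is clean is the actual content of Yoshioka's Proposition 3.4 (proved there via a careful finite-field/stack count using $\phi_*$, $R^1\phi_*$ and the local modification data), not something that can be taken for granted.

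The second gap is quantitative. The assertion that the stratum with jump data $\{a_i\}$ contributes exactly $q^{-\sum_{i<j}a_ia_j}\,w^{\sum_{i<j}(a_i-a_j)}$, and that the continuous moduli of modifications resum to $\eta(\tau)^{-r}$ with the correct stacky automorphism weights, is stated but not derived; note in particular that the $w$-exponent cannot be read off as a shift of $\dim_\mathbb{C}\CM$ (which, by Eq. (\ref{eq:dim}) and blow-up invariance of $\chi(\CO_S)$, changes by an even amount), but is the weight of the stratum in the virtual Poincar\'e function, and matching these weights and the automorphism factors so that the identity holds at the level of $\CI^\mu$ (rather than $\Omega$ or $\bar\Omega$) is exactly where the work in \cite{Yoshioka:1996, Gottsche:1998, Li:1999} lies. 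There is also a small bookkeeping slip: with $C_\mathrm{e}^2=-1$ one has $\deg\tilde F|_{C_\mathrm{e}}=(\phi^*c_1-kC_\mathrm{e})\cdot C_\mathrm{e}=k$, not $-k$ (harmless after recentering, but symptomatic of the fact that the Chern-class arithmetic was not actually carried out). In short: your outline reproduces the mechanism the paper cites, but to count as a proof it would have to reproduce the stratified count of \cite{Yoshioka:1996} in detail; as written, the semi-stability preservation, the vanishing of $R^1\phi_*$, and the derivation of $B_{r,k}$ are gaps.
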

\noi The two relevant cases for this article are $r=2,3$:
\be
B_{2,k}(z,\tau)=\frac{\sum_{n\in \mathbb{Z}+k/2} q^{n^2}w^n}{\eta(\tau)^2},\qquad B_{3,k}(z,\tau)=\frac{\sum_{m,n \in \mathbb{Z}+k/3} q^{m^2+n^2+mn}w^{4m+2n}}{\eta(\tau)^3}.
\ee
Note that $B_{r,k}(z,\tau)$ does not depend on $S$ or $J$.

The computation of $h_{r,c_1}(z,\tau;\bP^2)$ from $h_{r,\phi^* c_1-kC}(z,\tau;\Sigma_1)$ in general involves the
following three steps: 
\begin{enumerate}
\item  Compute $h^\mu_{r,\phi^* c_1-kC}(z,\tau; J_{1,0})$ by adding to
  $h_{r,\phi^* c_1-kC}(z,\tau; J_{1,\varepsilon})$ terms due to
  sheaves on $\Sigma_1$ which are not Gieseker stable for
  $J_{1,\varepsilon}$, but   $\mu$-semistable for $\phi^* H=J_{1,0}$,
  and consequently compute  $H^\mu_{r,\phi^* c_1-kC}(z,\tau; J_{1,0})$
  by adding the terms prescribed by
  Eq. (\ref{eq:stackinvariant}). The generating functions and the
  factorial factors in   Eq. (\ref{eq:stackinvariant}) combine these two steps very naturally
  into one.
\item Divide by $B_{r,k}(z,\tau)$ to obtain $H^\mu_{r,c_1}(z,\tau; \bP^2)$.
\item Determine $h_{r,c_1}(z,\tau; \bP^2)$ from $H^\mu_{r,c_1}(z,\tau; \bP^2)$ by reversing step (1).
\end{enumerate}

For $c_1=\beta C+f$, $\beta=0$ or $1$, and $J=J_{1,0}$,
$\mu$-stability is equivalent to Gieseker stability, and therefore
steps 1) and 3) become trivial. For example, one can compute
$h_{3,H}(z,\tau;\bP^2)$ starting from $h_{3,C+f}(z,\tau;J_{1,0})$ as
was done in Ref. \cite{Manschot:2010nc}, or from from $h_{3,f}(z,\tau;J_{1,0})$
which requires Conjecture \ref{conj:restrictfibre} and Eq. (\ref{eq:setfiltration}). One can verify that the first terms of both $q$-expansions of
$h_{3,H}(z,\tau;\bP^2)$ are equal, which is in agreement with Proposition
\ref{prop:blowup}. A proof of the equality of these expressions for
$h_{3,H}(z,\tau;\bP^2)$ would imply a proof of Conjecture
\ref{conj:restrictfibre} for $(r,c_1)=(3,f)$ since 
$h_{3,f}(z,\tau;J_{\varepsilon,1})$ is related to $h_{3,C+f}(z,\tau;J_{1,\varepsilon})$ by the blow-up formula and
wall-crossing.  


When $\mu$- and Gieseker stability are not equivalent, steps 1) and 3)
are not trivial. We will first explain them for $r=2$ following \cite{Yoshioka:1995}.  One obtains:
\begin{proposition}
\label{eq:h20P2} 
\begin{eqnarray}
h_{2,0}(z,\tau;\bP^2)=\frac{1}{B_{2,1}(z,\tau)} \left[
  h_{2,C}(z,\tau;J_{1,\varepsilon}) +\sum_{b<0 \atop b=-1 \mod 2} w^b q^{\frac{1}{4}b^2}h_{1,0}(z,\tau)^2\,\right]-\frac{1}{2}h_{1,0}(z,\tau;\bP^2)^2.\non
\end{eqnarray}
\end{proposition}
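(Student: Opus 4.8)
The plan is to derive the formula from the blow-up formula (Proposition~\ref{prop:blowup}) together with the three-step conversion procedure described above, making the two non-trivial steps explicit for $(r,c_1)=(2,0)$. First I apply Proposition~\ref{prop:blowup} with $S=\bP^2$, $\tilde S=\Sigma_1$, exceptional divisor $C_\mathrm{e}=C$, rank $r=2$, $J=H$ and $k=1$. Since $\phi^*0-1\cdot C=-C\equiv C\bmod 2$, twisting by the line bundle $\CO(-C)$ leaves $\Delta$ and hence $H^\mu$ unchanged (Eq.~(\ref{eq:ltwist})), so $H^\mu_{2,-C}(z,\tau;\Sigma_1,\phi^*H)=H^\mu_{2,C}(z,\tau;\Sigma_1,\phi^*H)$ and the blow-up formula becomes
\[
H^\mu_{2,0}(z,\tau;\bP^2)=\frac{1}{B_{2,1}(z,\tau)}\,H^\mu_{2,C}(z,\tau;\Sigma_1,\phi^*H).
\]
It then remains to express the right-hand side through $h_{2,C}(z,\tau;J_{1,\varepsilon})$ (step~1 of the procedure) and the left-hand side through $h_{2,0}(z,\tau;\bP^2)$ (step~3).

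For step~1 I would stratify the stack of $\phi^*H$-$\mu$-semistable sheaves with $\Gamma=(2,C,\mathrm{ch}_2)$ by their Harder--Narasimhan type with respect to the nearby generic polarization $J_{1,\varepsilon}$; for that polarization $J_{1,\varepsilon}$-semistable equals $J_{1,\varepsilon}$-stable, because $c_1\cdot f=C\cdot f=1$ is odd so no rank-$1$ subsheaf can have the same reduced Hilbert polynomial as $F$. A $\phi^*H$-$\mu$-semistable sheaf is then either $J_{1,\varepsilon}$-stable, contributing $h_{2,C}(z,\tau;J_{1,\varepsilon})$, or $J_{1,\varepsilon}$-unstable; in the latter case its maximal $J_{1,\varepsilon}$-destabilizer $E_1$ must have $\mu(E_1)\cdot\phi^*H=\mu(F)\cdot\phi^*H$ (otherwise $F$ would not be $\phi^*H$-$\mu$-semistable), which forces $c_1(E_1)=b_1C$, and $\mu(E_1)\cdot J_{1,\varepsilon}>\mu(F)\cdot J_{1,\varepsilon}$ then forces $b_1\ge 1$. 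One checks conversely that every such length-two filtration $0\subset F_1\subset F$ with rank-one quotients is $\phi^*H$-$\mu$-semistable and that no other HN-type occurs. Inserting $K_{\Sigma_1}=-2C-3f$ into the Euler-form weight $w^{-\sum_{i<j}r_ir_j(\mu_j-\mu_i)\cdot K_S}$ of~(\ref{eq:setHN}) gives $w^{1-2b_1}$, while the cross term $-\tfrac12(\mu(F)-\mu(F_1))^2$ in the discriminant~(\ref{eq:discriminant}), together with $C^2=-1$, produces the additional factor $q^{(1-2b_1)^2/4}$; summing~(\ref{eq:setHN}) over $b_1\ge 1$ and all $c_2(E_i)$ and setting $b=1-2b_1$ yields
\[
H^\mu_{2,C}(z,\tau;\Sigma_1,\phi^*H)=h_{2,C}(z,\tau;J_{1,\varepsilon})+\sum_{b<0\atop b\ \mathrm{odd}}w^b q^{\frac14 b^2}\,h_{1,0}(z,\tau)^2 .
\]

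For step~3 I carry out the analogous analysis on $\bP^2$ for $\Gamma=(2,0,\mathrm{ch}_2)$: the stack of $H$-$\mu$-semistable sheaves is the union of the $H$-Gieseker-semistable locus and the $H$-Gieseker-unstable-but-$\mu$-semistable locus. On the first, the virtual Poincar\'e function differs from $h_{2,0}(z,\tau;\bP^2)$ by the term prescribed by~(\ref{eq:stackinvariant}); here equality of the reduced Hilbert polynomials forces $\Gamma_1=\Gamma_2$, so the contribution is $\tfrac12$ times the ``diagonal'' of $h_{1,0}(z,\tau;\bP^2)^2$. On the second, equality of $\mu$-slopes with $c_1$'s summing to $0$ forces $c_1(E_1)=c_1(E_2)=0$, so the strata are length-two HN-filtrations by ideal sheaves $I_{Z_1},I_{Z_2}$ with $\mathrm{len}(Z_1)<\mathrm{len}(Z_2)$; their Euler-form weight in~(\ref{eq:setHN}) is trivial and there is no discriminant cross term, so they contribute the strictly-lower-triangular part of $h_{1,0}(z,\tau;\bP^2)^2$, equal by symmetry to $\tfrac12\big(h_{1,0}(z,\tau;\bP^2)^2-\text{diagonal}\big)$. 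Adding the two gives $H^\mu_{2,0}(z,\tau;\bP^2)=h_{2,0}(z,\tau;\bP^2)+\tfrac12 h_{1,0}(z,\tau;\bP^2)^2$, and substituting this identity and the step-1 formula into the blow-up relation yields the claim.

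The main obstacle is exactly this bookkeeping: showing that the $\mu$-semistable stack decomposes into precisely these strata (no HN-type is missed on either surface, and none of the listed strata fails $\mu$-semistability), and then assembling the rational prefactors correctly from Joyce's formulas~(\ref{eq:stackinvariant})/(\ref{eq:inversestackinv}) and~(\ref{eq:setHN}) together with the $|\mathrm{Aut}(\{\Gamma_i\})|$ and Euler-form weights, so that nothing is double-counted or dropped; the subtlety is that on $\Sigma_1$ (with $c_1=C$ odd) the would-be strictly $\mu$-semistable contributions are all already captured by the rank-one HN-strata with $b_1\ge1$, whereas on $\bP^2$ (with $c_1=0$) they are not. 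As consistency checks one verifies that the resulting $q$-expansion of $h_{2,0}(z,\tau;\bP^2)$ is $w\leftrightarrow w^{-1}$-symmetric, becomes integral after the multi-cover correction~(\ref{eq:refw}), agrees with the known rank-$2$ results, and that the overall monomial normalisations of $B_{2,1}$, $h$ and $H$ are consistent.
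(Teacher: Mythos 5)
Your proof is correct and follows essentially the same route as the paper's: the three-step blow-up procedure, with step 1 realized by the length-two HN strata on $\Sigma_1$ with $c_1(E_1)=b_1C$, $b_1\geq 1$, and step 3 by combining the Gieseker-unstable but $\mu$-semistable strata on $\bP^2$ with the $\mathcal{I}\to\bar\Omega$ conversion of Eq. (\ref{eq:stackinvariant}) to give the $\tfrac{1}{2}h_{1,0}(z,\tau;\bP^2)^2$ term. The paper's proof invokes exactly these strata, only more tersely; your version merely makes the Euler-form weights, the discriminant cross term and the $\tfrac{1}{2}$ bookkeeping explicit.
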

\begin{proof}
The only extended HN-filtrations which are Gieseker unstable for
$J=J_{1,\varepsilon}$ and $\mu$-semi-stable for $J=J_{1,0}$ have $\ell=\ell_\mu=2$. 
For the parametrization $c_1(E_2)=bC-af$, the set of sheaves which is unstable for
$J_{1,\varepsilon}$ but $\mu$-semistable for $J_{1,0}$ corresponds to
$b<0$ and $a=0$. This gives the second term inside the brackets. 
Consequently, step (2) divides by $B_{2,1}(z,\tau)$, and step (3)
subtracts the $\mu$-semi-stable sheaves which are not Gieseker
semi-stable with $\ell=2$ and $\ell_\mu=1$. 
\end{proof}

Alternatively, one can compute $h_{2,0}(z,\tau;\bP^2)$ starting from
$h_{2,0}(z,\tau;J_{1,\varepsilon})$. In that case the term due to step
(1) in the brackets is $\left( \sum_{b<0 \atop b=0 \mod
    2}w^bq^{\frac{1}{4}b^2}+\frac{1}{2} \right)\,h_{1,0}(z,\tau)^2$,
and one divides by $B_{2,0}(z,\tau)$. Addition of
$\frac{1}{2}h_{1,0}(z,\tau;\bP^2)$ provides the expected integer
invariants, in agreement with \cite{Yoshioka:1995}.
 Accidentily, the terms due to step (1) and step (3) can simply be incorporated by replacing
$J_{1,\varepsilon}$ by $J_{1,0}$ in $h_{2,\beta
  C}(z,\tau;J_{1,\varepsilon})$,  and can be written in terms of
the Lerch sum \cite{Bringmann:2010sd}.

The remainder of this section discusses $r=3$. In terms of $h_{3,C}(z,\tau;J_{1,\varepsilon})$, $h_{3,0}(z,\tau;\bP^2)$ is given by:
\begin{proposition}
\begin{eqnarray}
h_{3,0}(z,\tau;\bP^2)&=&\frac{1}{B_{3,1}(z,\tau)} \left[
  h_{3,C}(z,\tau;J_{1,\varepsilon}) +\left( \sum_{b< 0 \atop b=-2,-4 \mod 6 }
  w^{b}q^{\frac{1}{12}b^2}\right)\, h_{1,0}(z,\tau) \,
h_{2,0}(z,\tau;J_{1,\varepsilon}) \right. \non \\
&&+ \left( \sum_{b<0 \atop b=-1,-5 \mod 6 }
  w^{b}q^{\frac{1}{12}b^2}\right)\, h_{1,0}(z,\tau) \,
h_{2,C}(z,\tau;J_{1,\varepsilon}) \\
&&+\left.\left(\sum_{k_1,k_2<0\atop k_2=k_1+1 \mod
      3}w^{2(k_1+k_2)}q^{\frac{1}{3}(k_1^2+k_2^2+k_1k_2)}+\frac{1}{2}  \sum_{k<0, \atop k=-1,-2 \mod 3}
    w^{2k}q^{\frac{1}{3}k^2} \right)\,h_{1,0}(z,\tau)^3\right] \non \\
&&-\frac{1}{6} h_{1,0}(z,\tau;\bP^2)^3-\frac{2}{2}\,h_{1,0}(z,\tau;\bP^2)\,h_{2,0}(z,\tau;\bP^2).\non
\end{eqnarray}

\end{proposition}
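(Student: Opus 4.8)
The plan is to follow the three-step strategy outlined just before the statement, applied to $(r,c_1)=(3,0)$ on $\Sigma_1\to\bP^2$, with the exceptional divisor $C_\mathrm{e}=C$ and $\phi^*H=C+f=J_{1,0}$. The starting point is $h_{3,C}(z,\tau;J_{1,\varepsilon})$, obtainable from $h_{3,f}(z,\tau;J_{\varepsilon,1})$ of Eq.~\eqref{eq:h31eps} by wall-crossing across $C(\Sigma_1)$ using Eq.~\eqref{eq:wallcrossing}, or equivalently from Conjecture~\ref{conj:restrictfibre} via Eq.~\eqref{eq:recursion}.

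\medskip

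\noi\emph{Step (1): from Gieseker at $J_{1,\varepsilon}$ to $\mu$-stable at $J_{1,0}$.} First I would enumerate the extended HN-filtrations $0\subset F_1\subset F_2\subset\dots\subset F_\ell=F$ with $\Gamma(F)=(3,0,\mathrm{ch}_2)$ that are Gieseker-unstable for $J=J_{1,\varepsilon}$ but $\mu$-semi-stable for $J=J_{1,0}=\phi^*H$. Since $\mu(F')\cdot\phi^*H=0$ forces the destabilizing subsheaves to have $\mu$-slope lying on the wall, and parametrizing $c_1(E_i)=b_iC-a_if$, the relevant filtrations are those with $\sum b_i=0$, $\sum a_i=0$, $b_i/r_i$ and $a_i/r_i$ constant and equal to $0$, i.e. $a_i=0$ for all $i$ and the $b_i$ strictly decreasing in the $J_{1,\varepsilon}$-ordering (so $b_i<0$ for the later quotients). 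These split, as in the proof of Proposition~5.? for $h_{3,0}(z,\tau;J_{\varepsilon,1})$, into: length-$2$ pieces with an $r=1$ and an $r=2$ quotient (giving the $\sum_{b<0,\,b\equiv-2,-4\,(6)}$ and $\sum_{b<0,\,b\equiv-1,-5\,(6)}$ sums multiplying $h_{1,0}h_{2,0}$ and $h_{1,0}h_{2,C}$ respectively, the residue classes mod $6$ coming from the congruence $c_1\cdot f=0\bmod r$ combined with $r=3$); and length-$3$ pieces with three rank-$1$ quotients (giving the $\sum_{k_1,k_2<0,\,k_2\equiv k_1+1\,(3)}w^{2(k_1+k_2)}q^{(k_1^2+k_2^2+k_1k_2)/3}$ term from $\ell_\mu=3$, plus the $\tfrac12\sum_{k<0,\,k\equiv-1,-2\,(3)}w^{2k}q^{k^2/3}$ term from the $\ell_\mu=2$ cases, where the $\tfrac12=\tfrac1{|\mathrm{Aut}(\{\Gamma_i\};J_{1,0})|}$ combines the two orderings of the coincident pair). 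The powers of $q$ are just $r\Delta(\Gamma)-r\chi(S)/24$ computed from Eq.~\eqref{eq:discriminant} with all $a_i=0$, which gives $\tfrac1{12}b^2$ for a single rank-$1$ split-off carrying $bC$, and $\tfrac13(k_1^2+k_2^2+k_1k_2)$ for a full rank-$1$ filtration; the Euler-form weights $w^{-\sum_{i<j}r_ir_j(\mu_j-\mu_i)\cdot K_{\Sigma_1}}$ with $K_{\Sigma_1}=-2C-f$ reduce to $w^{b}$ and $w^{2(k_1+k_2)}$ respectively. Adding these to $h_{3,C}(z,\tau;J_{1,\varepsilon})$ and incorporating the factorials of Eq.~\eqref{eq:stackinvariant} produces $H^\mu_{3,C}(z,\tau;\Sigma_1,J_{1,0})$, which is the bracketed expression in the statement.

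\medskip

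\noi\emph{Steps (2) and (3).} Step (2) divides by $B_{3,1}(z,\tau)$, legitimate by Proposition~\ref{prop:blowup} with $r=3$, $k=1$, $\phi^*c_1-kC$ with $c_1=0$, yielding $H^\mu_{3,0}(z,\tau;\bP^2)$. Step (3) reverses step (1) on $\bP^2$: since $b_2(\bP^2)=1$ there is a unique polarization and one subtracts the $\mu$-semi-stable-but-not-Gieseker-semi-stable contributions with all quotients on the wall $c_1=0$, i.e. the length-$3$ all-rank-$1$ term $\tfrac16 h_{1,0}(z,\tau;\bP^2)^3$ (factorial $3!$) and the length-$2$ term $\tfrac{2}{2}h_{1,0}(z,\tau;\bP^2)h_{2,0}(z,\tau;\bP^2)$ (the $2$ in the numerator from the two orderings of which factor is the rank-$1$ piece, the $2$ in the denominator the automorphism factor), exactly as in the $r=2$ case of Proposition~\ref{eq:h20P2} but one degree higher. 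This gives the final two correction terms and completes the formula.

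\medskip

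The main obstacle is step (1): correctly classifying all extended HN-filtrations that are $J_{1,\varepsilon}$-Gieseker-unstable yet $J_{1,0}$-$\mu$-semi-stable, tracking the interplay of $\ell$, $\ell_\mu$ and $\ell_\mathrm{G}$, and getting the congruence conditions mod $6$ and mod $3$ on the $b_i$ and $k_i$ right — these arise from combining $c_1\cdot f\equiv0\bmod r$ with the ordering forced by $\varepsilon$ being positive but infinitesimal, so that $\mathrm{sgn}$ factors as in Eq.~\eqref{eq:wallcrossing} pin down which half-lattice of $b$'s contributes. Once the bookkeeping of the $|\mathrm{Aut}|$ factors and the $q$- and $w$-exponents (via Eqs.~\eqref{eq:discriminant} and \eqref{eq:setfiltration}) is done, assembling the result and rewriting in terms of $\theta_1,\eta$ is routine; one should then check consistency, e.g. integrality of the resulting $\bar\Omega$, the $w\leftrightarrow w^{-1}$ symmetry, and agreement of the leading $q$-coefficients with the known low-$c_2$ moduli spaces on $\bP^2$.
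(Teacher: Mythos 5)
Your proposal follows essentially the same route as the paper's own proof: the same three-step scheme, with step (1) identifying the length-$2$ filtrations (rank $1$ plus rank $2$, with $a=0$ and the two mod-$6$ residue classes pairing with $h_{2,0}$ and $h_{2,C}$) and the length-$3$ filtrations with $\ell_\mu=3$ and with $\ell=3,\ \ell_\mu=2$ (the $\tfrac12$ from $|\mathrm{Aut}(\{\Gamma_i\};J)|$), then division by $B_{3,1}(z,\tau)$, and step (3) subtracting $\tfrac16 h_{1,0}^3+h_{1,0}h_{2,0}$ on $\bP^2$, exactly as in the paper. Two small slips in your justification do not affect the conclusion but should be fixed: $K_{\Sigma_1}=-2C-3f$ (so $C\cdot K_{\Sigma_1}=-1$, which is what produces the asserted weights $w^{b}$ and $w^{2(k_1+k_2)}$; with your $-2C-f$ the $w$-exponents would come out with the wrong sign), and $h_{3,C}(z,\tau;J_{1,\varepsilon})$ is not obtained by wall-crossing from $h_{3,f}(z,\tau;J_{\varepsilon,1})$ (wall-crossing preserves $c_1$) but from $h_{3,C}(z,\tau;J_{\varepsilon,1})$, which vanishes since $C\cdot f=1\neq 0 \bmod 3$.
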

The desired integer invariants are obtained from
$h_{3,0}(z,\tau;\bP^2)$ after subtraction of $\frac{1}{3}h_{1,0}(3z,3\tau;\bP^2)=\frac{1}{3}\frac{i}{\theta(6z,3\tau)}$. The first non-vanishing coefficients
are presented in Table \ref{tab:betti30}. They are in agreement with
the expected dimension of $\CM(\Gamma)$ (\ref{eq:dim}). 

\begin{table}[h!]
\begin{tabular}{lrrrrrrrrrrrrrrrrr}
$c_2$ & $b_0$ & $b_2$ & $b_4$ & $b_6$ & $b_8$ & $b_{10}$ & $b_{12}$ & $b_{14}$
& $b_{16}$ & $b_{18}$ & $b_{20}$ & $b_{22}$ & $b_{24}$ & $b_{26}$ & $b_{28}$  &  $\chi$ \\
\hline
3 & 1 & 1 & 2 & 2 & 2 & 2 & & & & & & & & & & 18  \\
4 & 1 & 2 & 5 & 9 & 15 & 19 & 22 & 23 & 24 & & & & & & & 216 \\
5 & 1 & 2 & 6 & 12 & 25 & 43 & 70 & 98 & 125 & 142 & 154 & 156 & & & &1512 \\
6 & 1 & 2 & 6 & 13 & 28 & 53 & 99 & 165 & 264 & 383 & 515 & 631 &
723 & 774 & 795 & 8109 
\end{tabular}
\caption{The Betti numbers $b_n$ (with $n\leq
  \dim_\mathbb{C} \mathcal{M}$) and the Euler number $\chi$ of the moduli spaces of semi-stable sheaves
  on $\bP^2$ with $r=3$, $c_1=0$, and $3\leq c_2\leq 6$.}  
\label{tab:betti30}
\end{table}

\begin{proof}
The terms added to $h_{3,C}(z,\tau;J_{1,\varepsilon})$ in the brackets
are due to step (1). The last term on the first line and the term on
the second line are due to filtrations with $\ell=\ell_\mu=2$. If one
chooses $c_1(E_2)=bC-af$ as for $r=2$, the set of sheaves which are
unstable for $J_{1,\varepsilon}$ but 
$\mu$-semistable for $J_{1,0}$ corresponds to $b<0$ and $a=0$. Similarly, the
first term in parentheses on the third line is due to $\ell=\ell_\mu=3$, and the
second term due to $\ell=3$ and $\ell_\mu=2$. The sum of the terms in
the bracket is $H^\mu_{3,C}(z,\tau;J_{1,\varepsilon})$, and is divided
by $B_{3,1}(z,\tau)$ following step (2). Finally, step (3)
corresponds to the last line. 
\end{proof} 


%
%
%

As a consistency check, $h_{3,0}(z,\tau;\bP^2)$ can also be computed from
$h_{3,0}(z,\tau; J_{1,\varepsilon})$. Then the terms due to step (1)
are for $\ell=2$:
\begin{eqnarray} 
\label{eq:contbd4}
&&\left(\frac{2}{2}+ 2\sum_{b<0 \atop b=0 \mod 6 }
  w^{b}q^{\frac{1}{12}b^2}\right)\, h_{1,0}(z,\tau) \,
h_{2,0}(z,\tau;J_{1,\varepsilon})\\
&&\qquad + \left(2\sum_{b<0 \atop b=-3 \mod 6 }
  w^{b}q^{\frac{1}{12}b^2}\right)\, h_{1,0}(z,\tau) \,
h_{2,C}(z,\tau;J_{1,\varepsilon}),\non 
\end{eqnarray}  
and for $\ell=3$:
\be
\left(\sum_{k_1,k_2<0\atop k_1=k_2 \mod 3}w^{2(k_1+k_2)}q^{\frac{1}{3}(k_1^2+k_2^2+k_1k_2)}+ \frac{2}{2} \sum_{k<0 \atop k=0 \mod 3} w^{2k}q^{\frac{1}{3}k^2}+\frac{1}{6}\right)\,h_{1,0}(z,\tau)^3.
\ee

We conclude by briefly comparing the BPS invariants computed above to
the results obtained by Refs. \cite{Kool:2009,
  weist:2009} for Euler numbers of moduli spaces using toric
localization of the moduli spaces. Ref. \cite{weist:2009} computed
such Euler numbers for $\mu$-stable vector
bundles \cite{weist:2009} with rank $r\leq 3$ on $\bP^2$, whereas
Ref. \cite{Kool:2009}  computed such Euler numbers for $\mu$-stable
torsion free sheaves with rank $r\leq 3$ on various smooth toric
surfaces. If $\gcd(r,c_1)=1$ and for a generic choice of polarization,
the moduli space of $\mu$-stable sheaves is isomorphic to the
moduli space of Gieseker semi-stable sheaves.  Otherwise, the
moduli space of $\mu$-stable sheaves is a smooth open subset 
of the moduli space of Gieseker semi-stable sheaves. The difference
between generating functions of Euler numbers for vector bundles and
torsion free sheaves is an overall factor $\eta(\tau)^{r\chi(S)}$.

For Chern classes such that $\mu$-stability is equivalent to Gieseker semi-stability,
agreement of Refs. \cite{Kool:2009, weist:2009} with the
techniques described in this paper is expected. This is indeed established in
Refs. \cite{Manschot:2010nc, Manschot:2011dj}. In particular, Eq. (4.5) and Table 1 of Ref.
\cite{Manschot:2010nc} agree with Corollary 4.10 in Ref. \cite{weist:2009} and
Corollary 4.9 in Ref. \cite{Kool:2009}. If $\gcd(r,c_1)>1$ strictly Gieseker semi-stable
sheaves can occur and therefore agreement of $h_{r,0}(z,\tau;\bP^2)$ with Refs. \cite{Kool:2009,
  weist:2009} is not expected. Indeed the numbers in Table 1 above
appear to be different from the Euler numbers computed by Theorem 4.14
of Ref. \cite{weist:2009} and Corollary 4.9 of Ref. \cite{Kool:2009}. It would be interesting to precisely
understand the difference between the Euler numbers of the $\mu$-stable loci and the
BPS invariants computed above.


\providecommand{\href}[2]{#2}\begingroup\raggedright\endgroup

\end{document}